\documentclass[11pt, letterpaper]{article}
\usepackage[utf8]{inputenc}
\usepackage[english]{babel}
\usepackage{amsfonts}
\usepackage{amssymb}
\usepackage{amsthm}
\usepackage{authblk}
\usepackage{pdfpages}
\usepackage[letterpaper, margin=1in]{geometry}

\usepackage{tikz}
\usetikzlibrary{calc}

\usepackage{float}
\usepackage[skip=1pt]{caption}

\title{Computational complexity lower bounds of certain discrete Radon transform approximations}
\author{Timur M. Khanipov}
\affil{\slshape Institute for Information Transmission Problems\\ \slshape of the Russian Academy of Sciences (IITP RAS)}
\date{January 3, 2018}

\newtheorem{theorem}{Theorem}[section]
\newtheorem{corollary}[theorem]{Corollary}
\newtheorem{proposition}[theorem]{Proposition}
\newtheorem{remark}[theorem]{Remark}
\newtheorem{definition}[theorem]{Definition}
\newtheorem{lemma}[theorem]{Lemma}

\newcommand{\patset}{\mathcal}

\begin{document}

\maketitle

\begin{abstract}
For the computational model where only additions are allowed, the $\Omega(n^2\log n)$ lower bound on operations count with respect to image size $n\times n$ is obtained for two types of the discrete Radon transform implementations: the fast Hough transform and a generic strip pattern class which includes the classical Hough transform, implying the fast Hough transform algorithm asymptotic optimality. The proofs are based on a specific result from the boolean circuits complexity theory and are generalized for the case of boolean $\vee$ binary operation. 
\end{abstract}

{\bf Keywords:} discrete Radon transform, Hough transform, fast Hough transform, computational complexity, lower bound, boolean circuit, ensemble computation.

\section{Introduction}

The Radon transform of a function $f\colon\:\mathbb{R}^2\to\mathbb{R}$ maps all straight lines in $\mathbb{R}^2$ to line integrals of $f$:

\begin{equation}\label{radon}
l\mapsto\int\limits_l f ds
\end{equation}

\noindent It naturally arises in a large variety of applications including X-ray and other types of computed tomography, astronomy, electron microscopy, nuclear magnetic resonance, optics, stress analysis, geophysics and other areas~\cite{deans}. The Radon transform discrete analogues  are defined in at least two different ways. One approach is based on connection between the Radon and Fourier transforms~\cite{deans}, while the other one uses some sort of straightforward discretizations of equation~(\ref{radon}) and we will further call it ``discrete Radon transform''. A common variation of the discrete Radon transform is also known by the name ``Hough transform''~\cite{duda} when samples $f_{ij}$ are summed along all line patterns of type $j=[k i+b]$ or $i=[k j+b]$ for $(k,b)\in \patset{L}_n$ being a specific finite family of pattern parameters for a $n\times n$ image $I_n=\{f_{ij}\}$: 

\begin{equation}\label{hough}
(k,b)\mapsto \sum_{i=0}^{n-1} f_{i, [ki+b]},\qquad(k,b)\in\patset{L}_n.
\end{equation}

\noindent The Hough transform is well known in image processing mostly as means of detecting lines and is sometimes defined per se by~(\ref{hough}) without mentioning any ties to the Radon transform~\cite{shapiro}. If $|\patset{L}_n|$ grows as $n^2$ then a way of computing~(\ref{hough}) by direct summation would take $\Theta(n^3)$ operations which might be critically slow for some applications, especially real-time technical vision.

To mitigate this problem a different pattern family and the corresponding algorithm called the fast Hough transform were suggested, with each pattern being an approximation to~(\ref{hough}) lines and allowing simultaneous computation in~$\Theta(n^2 \log n)$ binary additions. The fast Hough transform was reinvented several times with the earliest known to the author published work being~\cite{gotz} in~1996. Brady's~1998 paper~\cite{brady} contains a thorough description of the algorithm and its properties including a logarithmic bound on pattern deviation from straight line. Due to efficient performance and possible in-place implementation, the fast Hough transform found many applications in various object recognition and technical vision areas: automatic documents orientation, vanishing points detection~\cite{FHT_underestimated} and even vehicle axles calculation in an intellectual transportation system~\cite{AVC} are only a few examples. It should be noted that the term ``fast Hough transform'' leads to some terminological confusion for it itself computes not the Hough transform~(\ref{hough}) but a specific approximation to it.

It was previously unknown whether the fast Hough transform could be computed in $o(n^2\log n)$ operations. Except for the trivial $\Omega(n^2)$ limitation no lower bounds were also known for the ``ordinary'' Hough transform~(\ref{hough}). In this paper we advance in these questions proving the $\Omega(n^2 \log n)$ lower bound for the minimal number of binary summations of both the fast Hough transform and a certain class of discrete Radon transform approximations covering~(\ref{hough}), when only addition operations are allowed. We use additive circuits as a natural computational model and exploit a specific theorem from the theory of boolean function complexity. In fact we prove a stronger result, also providing the $\Omega(n^2 \log n)$ bound for the case when pixels are considered to contain $\{0, 1\}$ values and summations are replaced with the boolean $\vee$ operation (using boolean instead of additive circuits).

The rest of the paper is organized as follows. Section~\ref{images} formally defines images, pixels and pattern sets and reproduces (from~\cite{CLB}) the necessary circuit complexity theory terms (adapted to images environment) as well as the key bounding theorem (without proof) which this paper is based on. In sections~\ref{FHT} and \ref{strip_section} the theorem is applied to the fast Hough transform and strip patterns resp. 
and  section~\ref{conclusion} summarizes the results and provides some directions for further research.

\section{Image patterns and complexity}\label{images}

We will consider raster images of height $h \geqslant 1$ and width $w \geqslant 2$ as a set $I$ of $w\cdot h$ variables (further called {\slshape pixels}) $p_{ij},\;i=0,...,h-1,\;j=0,...,w-1$ with values from either $\mathbb{F}_2=\{0, 1\}$ or $\mathbb{N} = \{0, 1, 2, ...\}$. We assume that $p_{00}$ resides in the bottom left corner. Where it does not lead to confusion the word {\slshape pixel} means both the variable and the value it might contain.

Suppose that a particular commutative semigroup operation is defined on pixel values: logical $\vee$ for $\mathbb{F}_2$ and arithmetic $+$ for $\mathbb{N}$. The words ``sum'', ``summation'' and the  $\Sigma$ symbol are freely used for both cases, this does not produce ambiguity because $\mathbb{F}_2$ and $\mathbb{N}$ are considered separately.

A {\slshape pattern} $T$ is a non-empty subset of $I$. A {\slshape pattern set $\patset{T}=\{T_k\}_{k=1}^{m}$} is a non-empty set of $m$ distinct patterns. Consider the task of simultaneously computing the following $m$ sums using only the semigroup operation ($+$ or $\vee$), generalizing~(\ref{hough}):

\begin{equation}\label{patterns_sum}
y_k = \sum_{p_\in T_k} p, \qquad\qquad k=1,...,m.
\end{equation}

An obvious way would be to perform a total of $\sum_{k=1}^m (|T_k|-1)$ binary operations, calculating each $y_k$ separately. The trick, however, is to exploit the patterns specific structure which might allow to greatly reduce the number of additions by reusing the sums of pixel subsets which reoccur in many patterns. A natural model for defining the complexity of problem~(\ref{patterns_sum}), which takes into account only the internal redundancy of patterns, is the circuit model. Further in this section we briefly describe it.

Let {\slshape circuit} be a directed acyclic graph with $w\cdot h$ input nodes  ({\slshape inputs}) $p_{ij}$ of zero fanin (no incoming edges) and $m$ output nodes ({\slshape outputs}) $y_k$ of zero fanout (no outgoing edges). Non-zero fanin nodes are called {\slshape gates} and may have any positive number of incoming edges, each gate performs summation over its fanin nodes. The {\slshape size} of a circuit is the total number of edges in it. We say that a circuit {\slshape computes} pattern set $\patset{T}$ if after performing all summations starting from the inputs, the outputs will contain the $y_k$ values according to~(\ref{patterns_sum}). At least one trivial computing circuit exists for any $\patset{T}$: take $m$ outputs each having $|T_k|$ incoming edges from the appropriate inputs, so the following definitions are correct:

\begin{definition}
OR-complexity $\mathrm{OR}(\patset{T})$ is the minimal size among all circuits computing $\patset{T}$ over the $(\mathbb{F}_2, \vee)$ semigroup. 
\end{definition}

\begin{definition}
SUM-complexity $\mathrm{SUM}(\patset{T})$ is the minimal size among all circuits computing $\patset{T}$ over the $(\mathbb{N}, +)$ semigroup. \end{definition}

For practical applications it might be more convenient to consider circuits containing only fanin-2 (i.e. with two or less incoming edges) nodes and count the number of gates instead of edges thus giving the complexity in terms of the minimal number of binary operations used. We will denote these measures as $\mathrm{OR2}(\patset{T})$ and $\mathrm{SUM2}(\patset{T})$ for $(\mathbb{F}_2, \vee)$ and $(\mathbb{N}, +)$ resp. From the following proposition, any OR/SUM lower bound yields almost the same OR2/SUM2 bound, so the latter practically more common case is reduced to the former one when considering lower bounds.

\begin{proposition}\label{sum2_proposition}For $L\in\{\mathrm{OR}, \mathrm{SUM}\}$ and any pattern set $\patset{T}$,
$L2(\patset{T}) \geqslant \frac12 L(\patset{T})$.
\end{proposition}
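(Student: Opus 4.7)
The plan is to observe that any fanin-2 circuit is already a legitimate circuit in the broader sense used for the $L(\mathcal{T})$ measure, and then simply bound the number of edges by twice the number of gates.

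First I would fix an optimal fanin-2 circuit $C$ over the appropriate semigroup computing $\mathcal{T}$, so that $C$ has exactly $L2(\mathcal{T})$ gates. By the definition of the fanin-2 restriction, every gate of $C$ has either one or two incoming edges, while every input node has fanin zero and therefore contributes no edges at all. Summing fanins over the gates then yields
\begin{equation*}
|E(C)| \;=\; \sum_{g \text{ a gate of } C} \mathrm{fanin}(g) \;\leqslant\; 2 \cdot (\text{number of gates of } C) \;=\; 2\,L2(\mathcal{T}).
\end{equation*}

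Next I would note that $C$ itself is a directed acyclic graph with the prescribed inputs $p_{ij}$, the prescribed outputs $y_k$, and with each gate performing the semigroup operation over its fanin; it therefore qualifies as a circuit in the unrestricted sense used to define $L(\mathcal{T})$. Consequently $L(\mathcal{T}) \leqslant |E(C)| \leqslant 2\,L2(\mathcal{T})$, which rearranges to the stated inequality $L2(\mathcal{T}) \geqslant \tfrac12 L(\mathcal{T})$, and the argument is uniform in $L \in \{\mathrm{OR}, \mathrm{SUM}\}$ since the semigroup is only used to check that $C$ correctly computes $\mathcal{T}$.

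There is no real obstacle here; the statement is essentially a bookkeeping consequence of the two definitions. The only point worth making explicit is that the fanin-2 model is a strict specialization of the general model, so an optimal fanin-2 solution yields an admissible general-model solution whose size (measured in edges rather than gates) exceeds the gate count by at most a factor of two.
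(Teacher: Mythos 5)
Your argument is correct and matches the paper's own proof: both bound the edge count of an optimal fanin-2 circuit by twice its gate count (summing fanins over gates) and then observe that this circuit is admissible for the unrestricted edge-counting measure. Your write-up is just a slightly more explicit version of the same bookkeeping.
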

\begin{proof}
Indeed, for the number of gates $g$ and the number of edges $e$ in any fanin-2 circuit $G$ holds $e\leqslant 2g$ (adding a new gate produces at most two new edges). So if $G$ is the fanin-2 circuit with the minimal gates count, $\mathrm{L2}(\patset{T})=g\geqslant\frac12e\geqslant\frac12\mathrm{L}(\patset{T})$.
\end{proof}

\begin{remark}
The question of mere existence of a fanin-2 addition circuit computing $\patset{T}$ over $(\mathbb{N}, +)$ with a given number of binary gates can straightforwardly be reformulated as the so called ensemble computation problem and is thus NP-complete \cite{GARY}.
\end{remark}

It is easy to see that any circuit computing $\patset{T}$ over $(\mathbb{N}, +)$ would also compute $\patset{T}$ over  $(\mathbb{F}_2, \vee)$ after simply changing the summation operation (see~\cite[Introduction]{CLB} for a detailed explanation) which implies

\begin{proposition}\label{sum_or_ineq}
$\mathrm{SUM}(\patset{T}) \geqslant \mathrm{OR}(\patset{T})$ and$\;\;\mathrm{SUM2}(\patset{T}) \geqslant \mathrm{OR2}(\patset{T})$ for any $\patset{T}$.\qed
\end{proposition}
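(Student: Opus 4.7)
The plan is to start from an optimal additive circuit and reinterpret it as a boolean OR circuit of the same size. Concretely, let $G$ be a circuit over $(\mathbb{N}, +)$ of minimum size $\mathrm{SUM}(\patset{T})$ computing $\patset{T}$, and let $G'$ be the circuit obtained by leaving the graph structure untouched but replacing every addition gate with a $\vee$ gate. Since $|G'| = |G|$, it suffices to prove that $G'$ correctly computes $\patset{T}$ over $(\mathbb{F}_2, \vee)$; this will give $\mathrm{OR}(\patset{T}) \leq |G'| = \mathrm{SUM}(\patset{T})$. The fanin-2 case $\mathrm{SUM2} \geq \mathrm{OR2}$ follows identically because the substitution preserves the fanin of every node.

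The first step is a bookkeeping lemma about $G$: for every node $v$, the value computed at $v$ (on $\mathbb{N}$-valued inputs) is the linear form $\sum_{p \in I} c_{v,p}\,p$, where $c_{v,p}$ is the number of directed paths from the input $p$ to $v$. This is immediate by induction on the topological order of $G$: inputs have $c_{p,q} = \delta_{pq}$, and for a gate with predecessors $v_1, \dots, v_r$ the coefficients satisfy $c_{v,p} = \sum_i c_{v_i,p}$, matching path counts. Since pixels are independent formal variables ranging over $\mathbb{N}$, correctness of $G$ at the outputs forces the polynomial identity $\sum_p c_{y_k,p}\,p = \sum_{p \in T_k} p$, hence $c_{y_k, p} = 1$ for $p \in T_k$ and $c_{y_k, p} = 0$ otherwise. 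In particular, the set of inputs with a directed path to $y_k$ in $G$ is exactly $T_k$.

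The second step transfers this to $G'$. A straightforward structural induction shows that every gate $v$ of $G'$ outputs $\bigvee_{p : c_{v,p} > 0} p$ on $\mathbb{F}_2$-valued inputs, since $\vee$ is idempotent and the set of inputs that have a path to $v$ coincides with $\{p : c_{v,p} > 0\}$. Combining this with the coefficient description at the outputs yields that $y_k$ in $G'$ evaluates to $\bigvee_{p \in T_k} p$, which is precisely~(\ref{patterns_sum}) interpreted over $(\mathbb{F}_2, \vee)$. This establishes $\mathrm{OR}(\patset{T}) \leq \mathrm{SUM}(\patset{T})$; rerunning the argument with a minimum-size fanin-2 additive circuit yields $\mathrm{OR2}(\patset{T}) \leq \mathrm{SUM2}(\patset{T})$.

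There is no genuine obstacle here, only a subtlety worth flagging explicitly: one must justify why correctness of the additive circuit is a polynomial identity (not merely an equality for a specific pixel assignment), so that the path-count coefficients are actually forced to be 0 or 1 at the outputs. This is where treating pixels as formal variables over $\mathbb{N}$ — rather than as numerical inputs — is essential, and it is the only content of the argument beyond the direct substitution of operations.
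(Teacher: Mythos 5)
Your proposal is correct and takes essentially the same route as the paper, which simply observes that an additive circuit still computes $\patset{T}$ after the $+$ gates are replaced by $\vee$ gates (deferring details to the cited reference); you have merely filled in the path-counting justification that the output coefficients are forced to be the indicator of $T_k$ and that idempotence of $\vee$ then gives the correct boolean outputs. No gaps.
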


The inverse is not true and the complexity gap can be quite large (see, for example,~\cite[paragraph 5.1]{CLB}). In image processing the OR-case would seem somewhat ``exotic'' with the SUM-case being by far more common but due to proposition~\ref{sum_or_ineq} any lower bound on $\mathrm{OR}(\patset{T})$ provides the same lower bound on $\mathrm{SUM}(\patset{T})$, so it is sufficient to estimate only the stronger OR-case.

The two following simple propositions state that adding patterns to a set or extending it onto a larger image does not reduce its complexity. They are useful in further sections when we are reviewing only a distinguished subset (e.g. mostly vertical inclined to the right) of possible line patterns or consider subimages. 

\begin{proposition}\label{subpattern_prop}For $L\in\{\mathrm{OR}, \mathrm{SUM}, \mathrm{OR2}, \mathrm{SUM2}\}$ and pattern sets $\patset{M}\subseteq\patset{T}$,
$L(\patset{T}) \geqslant L(\patset{M})$.
\end{proposition}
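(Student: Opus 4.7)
The plan is to start from an optimal circuit $G$ realising $L(\patset{T})$ and produce from it, by deletions only, a circuit $G'$ that computes $\patset{M}$; since no edges or nodes are ever added, the size of $G'$ will not exceed that of $G$ and the claim will follow directly.

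The construction proceeds in two passes. First, for every pattern $T_k\in\patset{T}\setminus\patset{M}$ I would delete the output node $y_k$ together with its incoming edges. The remaining graph is still a directed acyclic graph whose output nodes are exactly $\{y_k:T_k\in\patset{M}\}$, and each of these outputs is still fed by the same subcircuit as in $G$, so it still evaluates to $\sum_{p\in T_k} p$ under the relevant semigroup operation. Second, I would iteratively remove every non-input node that has become fanout-zero, together with its incoming edges. This cleanup changes neither the acyclicity of the graph nor the values at the surviving outputs, and it certainly does not raise any node's fanin, so the fanin-$2$ restriction is preserved when it was present. The result $G'$ is therefore a valid circuit of the same type computing $\patset{M}$.

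To close the argument I would simply observe that each deletion step weakly decreases both the total number of edges and the total number of non-input nodes. Hence the size of $G'$ is at most the size of $G$ whether size is measured as edge count (the $\mathrm{OR}$ and $\mathrm{SUM}$ cases) or as gate count (the $\mathrm{OR2}$ and $\mathrm{SUM2}$ cases), giving $L(\patset{M})\leqslant L(\patset{T})$ in all four cases.

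I do not anticipate any real obstacle; the only point that deserves a line of verification is that iterated removal of fanout-zero interior vertices leaves the surviving outputs' computed values intact, but this is immediate by induction on the removal order since a vertex is deleted only after it has stopped contributing to any output.
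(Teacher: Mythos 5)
Your argument is correct and is essentially the paper's own one-line proof (delete the unwanted outputs and then prune everything that no longer feeds a surviving output), just written out with the cleanup pass made explicit. No issues.
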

\begin{proof}
Any circuit computing $\patset{T}$ can be reduced to the one computing $\patset{M}$ by removing the gates and edges which participate only in $\patset{T} \setminus \patset{M}$ computation.
\end{proof}

\begin{proposition}\label{subimage_prop}If $\patset{T}$ is a pattern set on image $I$, $I_0\subseteq I$ is a subimage and $\patset{M}=\{T\cap I_0 \mid T\in\patset{T} \}$ then for $L\in\{\mathrm{OR}, \mathrm{SUM}, \mathrm{OR2}, \mathrm{SUM2}\}$ holds $L(\patset{M})\leqslant L(\patset{T})$.
\end{proposition}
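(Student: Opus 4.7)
The plan is to mirror the argument of Proposition~\ref{subpattern_prop}, but instead of deleting gates that serve unwanted outputs, I will delete inputs that lie outside the subimage and propagate the deletions forward. Let $G$ be a circuit computing $\patset{T}$ with size exactly $L(\patset{T})$ (one exists by definition of $L$); I will transform it into a circuit $G'$ computing $\patset{M}$ of size at most $L(\patset{T})$.

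The construction proceeds in one sweep. First, remove every input node $p\in I\setminus I_0$ from $G$ together with all its outgoing edges. Every remaining gate still performs the semigroup operation over its (now possibly smaller) fanin; by induction on topological depth, the value accumulated at the former output $y_k$ is precisely $\sum_{p\in T_k\cap I_0}p$, since the deleted inputs would have contributed the semigroup identity $0$ to every sum they fed into. For outputs where $T_k\cap I_0=\varnothing$ we simply drop the output node (this case does not correspond to any pattern of $\patset{M}$, since patterns are required to be non-empty); when two distinct $T_k$ yield the same intersection $T_k\cap I_0$, we keep only one copy of the output. Neither step increases the size of the circuit.

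What remains to check is that the resulting $G'$ is a legitimate circuit in the sense required by each $L\in\{\mathrm{OR},\mathrm{SUM},\mathrm{OR2},\mathrm{SUM2}\}$. The size (edge count) satisfies $\mathrm{size}(G')\leqslant\mathrm{size}(G)=L(\patset{T})$ because we only removed edges. For the fanin-2 variants, pruning edges can only decrease fanin, so every surviving gate remains fanin-2; a gate left with fanin $0$ computes the identity $0$ and can simply be erased along with its outgoing edges, again only decreasing the gate count. Hence $G'$ is a valid circuit computing $\patset{M}$ whose size upper-bounds $L(\patset{M})$, giving $L(\patset{M})\leqslant L(\patset{T})$.

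There is no real obstacle here; the only point that needs care is the observation that setting inputs to the semigroup identity is equivalent, at the circuit level, to deleting those inputs and their outgoing edges, which is what makes the size bound automatic rather than requiring a separate accounting step.
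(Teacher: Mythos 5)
Your proposal is correct and follows essentially the same route as the paper's proof: delete the inputs outside $I_0$ (equivalently, permanently zero them), propagate the removals forward, and clean up empty or duplicated outputs, noting that only deletions occur so the size cannot increase. Your explicit handling of the fanin-$2$ case and of outputs with $T_k\cap I_0=\varnothing$ is a slightly more careful spelling-out of the same idea.
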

\begin{proof}
Indeed, any circuit computing $\patset{T}$ can be reduced to a circuit computing $\patset{M}$ by removing all nodes (and the corresponding edges) which depend only on pixels from $I\setminus I_0$ and possibly cleaning out the reoccurring outputs might they appear (this procedure is equivalent to permanently zeroing all these inputs).
\end{proof}

\noindent We can now define a few terms which are needed to formulate the lower bound theorem below.

\begin{definition}
{\slshape Volume} of a pattern set $\patset{T}$ 
$$
V(\patset{T}) = \sum_{T\in\patset{T}} |T|.
$$

\end{definition}

\begin{definition}
{\slshape Intersection area} of a pattern set $\patset{T}$ 
$$
S(\patset{T}) = |\patset{T}|\cdot |\bigcap\limits_{T\in\patset{T}}T|.
$$
\end{definition}

\begin{definition}
Self-intersection measure of a pattern set $\patset{T}$
$$
r(\patset{T}) = \max_{\varnothing\ne\patset{R} \subseteq \patset{T}} S(\patset{R}).
$$
\end{definition}

\begin{remark}
Terms $S(\patset{T})$, $V(\patset{T})$ and $r(\patset{T})$ directly correspond to terms ``area'', $|A|$ and $r(A)$ resp. from~\cite{CLB} for a boolean matrix $A$ corresponding to system~(\ref{patterns_sum}) when formally written in vector form $\mathbf{y} = A\mathbf{p}$ for a size $w\cdot h$ vector $\mathbf{p}=(p_{00}, p_{01}, ..., p_{h-1\:w-1})^T$ and a size $m$ vector $\mathbf{y}=(y_1,...,y_m)^T$. Patterns themselves correspond to the rows of $A$ in this representation. See~\cite{CLB} for more details.
\end{remark}

With this remark in mind we can now reformulate the area bounding theorem 3.12 from \cite[paragraph 3.4]{CLB} in the following way:

\begin{theorem}\label{bound_theorem}
For any pattern set $\patset{T}$ 

$$
  \mathrm{OR}(\patset{T}) \geqslant \frac{3 V(\patset{T})}{ r(\patset{T})}\log_3\frac{V(\patset{T})}{|\patset{T}|}.\qed
$$
\end{theorem}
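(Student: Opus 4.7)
My primary plan is to apply Theorem 3.12 of \cite{CLB} directly. The remark immediately preceding the statement already supplies a dictionary between the pattern-set data $V(\patset{T}),\,S(\patset{R}),\,r(\patset{T})$ and the matrix-level quantities $|A|$, area, $r(A)$ of \cite{CLB}; moreover, an OR-circuit over $(\mathbb{F}_2,\vee)$ computing $\patset{T}$ is literally a rectifier / monotone-OR network computing the boolean product $\mathbf{y}=A\mathbf{p}$ studied there. So my first step is to verify this correspondence in full detail --- checking in particular that a minimal-size circuit for the pattern set is also a minimal-size rectifier network for $A$ and vice versa --- and then transport the cited inequality verbatim.

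If I had to produce a proof from scratch, I would start from the following structural fact. In any minimal OR-circuit $C$, monotonicity and idempotence of $\vee$ imply that each gate $g$ evaluates $\bigvee_{p\in S_g}p$ for a well-defined support $S_g\subseteq I$. If the directed paths leaving $g$ reach precisely the outputs indexed by $K_g\subseteq\{1,\dots,m\}$, then necessarily $S_g\subseteq\bigcap_{k\in K_g}T_k$: a shared gate can only combine pixels common to all patterns that consume it. This is the link between sharing in the circuit and the intersection quantity $S(\cdot)$, and through it to $r(\patset{T})$.

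Given this structural lemma, the bound would be obtained by a recursive charging argument. I would iteratively locate a subset $\patset{R}\subseteq\patset{T}$ on which sharing is extremal (so that $S(\patset{R})$ approaches $r(\patset{T})$), remove the edges of $C$ that serve the corresponding shared subcircuit, and argue that the volume of the residual problem drops by a constant factor --- the factor $3$ is precisely what later surfaces as the base of the logarithm. Each round contributes $\Omega(V(\patset{T})/r(\patset{T}))$ edges, and the process runs for $\Theta(\log_3(V(\patset{T})/|\patset{T}|))$ rounds, producing the stated bound after multiplication.

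The step I expect to be hardest, and which is the technical heart of \cite{CLB}, is maintaining the inductive invariant across the peeling process: after removing one shared cluster one must argue that the self-intersection measure of the residue has not collapsed, so that the per-round edge count remains $\Omega(V/r)$. Reproducing that control from scratch would require importing the full technical apparatus of \cite{CLB}, so in practice the cleanest route is to use the dictionary above and cite Theorem 3.12 directly.
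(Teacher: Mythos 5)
Your primary plan---transporting Theorem 3.12 of \cite{CLB} through the dictionary set up in the preceding remark---is exactly what the paper does: the theorem is stated with no proof precisely because it is a direct reformulation of the cited result, with patterns as rows of $A$, $V(\patset{T})=|A|$, and $r(\patset{T})=r(A)$. Your speculative from-scratch sketch is not needed (and is not attempted by the paper either), so the citation route you correctly identify as cleanest is the intended one.
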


Values $V(\patset{T})$ and $|\patset{T}|$ are easy to calculate but assessing $r(\patset{T})$ may be a non-trivial task. The next two sections give linear relatively to image size upper bounds of this term for the fast Hough transform and strip pattern sets which yield the desired lower bounds on complexity.

\section{Fast Hough transform}\label{FHT}
The fast Hough transform uses a special set of inductively defined image patterns constructed to approximate lines and ease their simultaneous computation. These patterns are defined separately for four symmetrical cases of mostly vertical or mostly horizontal, inclined to the right or left lines and it is enough to analyze only one of them. Simply described, FHT patterns $\patset{H}_0$ of order zero are single bottom pixels and patterns $\patset{H}_k$ of order $k>0$ are produced by putting patterns $\patset{H}_{k-1}$ of order $k-1$ onto their own tops with a possible additional horizontal shift of one pixel and cyclic wrapping over the right image border. Each pattern has width $1$ and height $2^k$. Distinct FHT pattern shapes generation for the $4\times 4$ case is shown in fig.~\ref{fht_shapes} with ``parent'' subshapes shaded at each stage. Every 4-pixel shape may start in 4 different bottom positions giving a total of 16 patterns (see fig.~\ref{fht_shifts} with all 4 positions of the shape in fig.~\ref{fht_shapes} below). 

For the formal definition we first assume that image height $h = 2^d,\;d\in\mathbb{N}$, this restriction is removed in the end of the section. We will need an auxiliary {\slshape horizontal span} function  $\Delta(T) = (j_{top} - j_{bot})\;\mathit{mod}\;w$, whith $j_{top}$ and $j_{bot}$ being horizontal coordinates of pattern $T$ top and bottom pixels resp. and a ``width-cyclic translation'' operator $\mathit{tran}_{a,b}(T) = \{p_{i+a,\:(j+b)\,\mathit{mod}\,w} \mid p_{ij}\in T\}.$

Now we can define mostly vertical inclined to the right {\slshape FHT patterns} $\patset{H}_{h,w}$ {\slshape of height} $h$, shortly also denoted as $\patset{H}_d$ {\slshape(``of order'' $d$)} for fixed $w$ (remember that $w\geqslant 2$), by induction:

\begin{enumerate}
\item $\patset{H}_0 = \{\{p_{00}\}, \{p_{01}\}, ..., \{p_{0\;{w-1}}\}\},$ 
\item $\patset{H}_{k} = \{T\:\cup\: \mathit{tran}_{\,2^{k-1},\,\Delta(T)+s}(T) \mid T\in\patset{H}_{k-1}, s\in\{0, 1\}\}$ for $k=1, 2, ..., d.$
\end{enumerate}

\begin{figure}
\begin{center}
\includegraphics[scale=0.75]{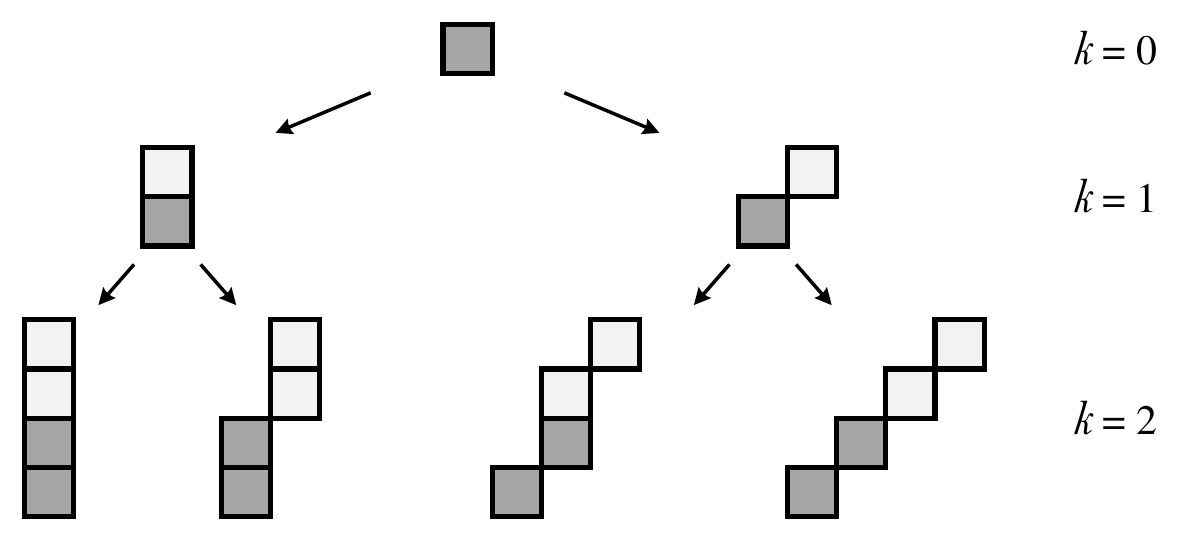}
\caption{Distinct 4-pixel FHT pattern shapes generation for $h=w=4$ with $\Delta=0, 1, 2, 3$ (from left to right below).}\label{fht_shapes}
\end{center}
\end{figure}

\begin{figure}
\begin{center}
\includegraphics[scale=0.75]{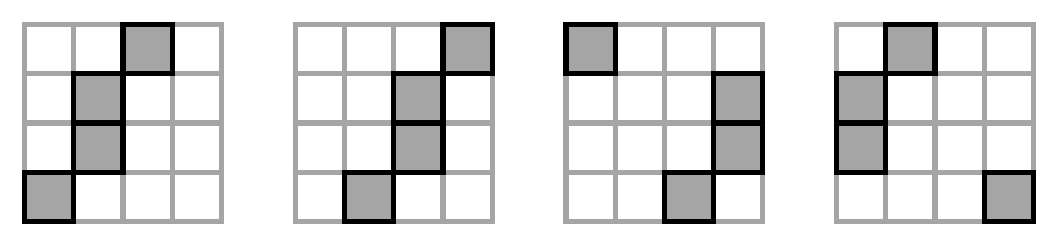}
\caption{Similar FHT patterns starting at different bottom pixels, $\Delta=2$.}\label{fht_shifts}
\end{center}
\end{figure}

The following propositions give some description to these patterns and follow directly from the definition or may easily be proved using induction (condition $w\geqslant 2$ is important). For convenience we use terms {\slshape bottom image (pixels)} $I^h_b = \{p_{ij} \mid 0\leqslant j<h/2\}$ and {\slshape top image (pixels)} $I^h_t = \{p_{ij} \mid h/2\leqslant j < h\}$.

\begin{proposition}\label{restriction}
For $h>1$ the restriction of $\patset{H}_{h,w}$ to top image is also FHT pattern of height $h/2$ (after appropriate pixels renumbering). The same obviously holds for the bottom image.\qed
\end{proposition}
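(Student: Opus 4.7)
The plan is to unfold the inductive definition of $\mathcal{H}_{h,w}=\mathcal{H}_d$ and observe that the construction in step~2 is set up precisely so that every pattern $T\in\mathcal{H}_d$ decomposes into a bottom half contained in $I^h_b$ and a top half contained in $I^h_t$. Concretely, by definition $T = T'\cup\mathit{tran}_{\,2^{d-1},\,\Delta(T')+s}(T')$ for some $T'\in\mathcal{H}_{d-1}$ and $s\in\{0,1\}$; since the translation shifts $T'$ upward by $2^{d-1}$ rows and $T'$ itself has height $2^{d-1}$, the second summand lies entirely in $I^h_t$ while $T'$ lies entirely in $I^h_b$. Thus $T\cap I^h_t = \mathit{tran}_{\,2^{d-1},\,\Delta(T')+s}(T')$, and after renumbering rows by subtracting $2^{d-1}$ this becomes $\mathit{tran}_{\,0,\,\Delta(T')+s}(T')$, a purely horizontal cyclic translate of an element of $\mathcal{H}_{d-1}$.

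To identify the collection of all such restrictions with $\mathcal{H}_{d-1}=\mathcal{H}_{h/2,w}$, I would first establish as an auxiliary lemma that every $\mathcal{H}_k$ is closed under horizontal cyclic translation, i.e.\ $\mathit{tran}_{\,0,a}(\mathcal{H}_k)=\mathcal{H}_k$ for all $a$. This is immediate for $k=0$ (the orbit of $\{p_{00}\}$) and passes through the induction by the identity
$$
\mathit{tran}_{\,0,a}\bigl(T'\cup\mathit{tran}_{\,2^{k-1},\,\Delta(T')+s}(T')\bigr) \;=\; \mathit{tran}_{\,0,a}(T')\;\cup\;\mathit{tran}_{\,2^{k-1},\,\Delta(\mathit{tran}_{\,0,a}(T'))+s}\bigl(\mathit{tran}_{\,0,a}(T')\bigr),
$$
which uses the obvious fact that $\Delta$ is invariant under $\mathit{tran}_{\,0,a}$ (both $j_{top}$ and $j_{bot}$ shift by the same amount mod $w$) together with the inductive hypothesis applied to $T'$.

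With horizontal-shift closure in hand, the inclusion $\{T\cap I^h_t:T\in\mathcal{H}_d\}\subseteq\mathcal{H}_{d-1}$ follows from the paragraph above, and the reverse inclusion is easy: for any $U\in\mathcal{H}_{d-1}$ set $T':=\mathit{tran}_{\,0,-\Delta(U)}(U)\in\mathcal{H}_{d-1}$; then $\Delta(T')=\Delta(U)$ and $\mathit{tran}_{\,0,\Delta(T')+0}(T')=U$, so $U$ is realised as the (renumbered) top restriction of $T'\cup\mathit{tran}_{\,2^{d-1},\,\Delta(T')}(T')\in\mathcal{H}_d$ with $s=0$. The statement for the bottom image is literally the definition: $T\cap I^h_b=T'\in\mathcal{H}_{d-1}$ with no renumbering required.

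No step is genuinely hard; the only mild subtlety is making sure the renumbering and the closure-under-cyclic-shift argument are compatible, which is why I would state the $\Delta$-invariance explicitly before unrolling the induction. Everything else is mechanical bookkeeping of indices modulo $w$.
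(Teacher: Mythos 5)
Your argument is correct and is exactly the route the paper intends: it leaves Proposition~\ref{restriction} unproved, remarking only that these statements ``follow directly from the definition or may easily be proved using induction,'' which is precisely your unrolling of step~2 of the inductive construction plus the (true and needed) auxiliary observation that each $\patset{H}_k$ is closed under horizontal cyclic translation. Your proposal simply supplies the bookkeeping the paper omits, including the reverse inclusion that makes the top restriction \emph{all} of $\patset{H}_{h/2,w}$ rather than a subset.
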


\begin{proposition}\label{fht_size_proposition}
$|\patset{H}_{h,w}| = w\cdot h$.\qed
\end{proposition}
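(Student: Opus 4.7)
The plan is to proceed by induction on the order $k$ (writing $h = 2^k$). The base case is immediate from the definition: $\patset{H}_0 = \{\{p_{0j}\} \mid 0 \leqslant j < w\}$, so $|\patset{H}_0| = w = w \cdot 2^0$. For the inductive step, assume $|\patset{H}_{k-1}| = w \cdot 2^{k-1}$. The recursive definition presents $\patset{H}_k$ as the image of the map $(T, s) \mapsto T \cup \mathit{tran}_{2^{k-1},\,\Delta(T)+s}(T)$ with domain $\patset{H}_{k-1} \times \{0, 1\}$, so immediately $|\patset{H}_k| \leqslant 2|\patset{H}_{k-1}| = w \cdot 2^k$. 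The main task is to establish that this map is injective, which promotes the inequality to an equality.

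For injectivity I would exploit the row structure: by induction each $T \in \patset{H}_{k-1}$ has height $2^{k-1}$ and occupies rows $0, \ldots, 2^{k-1}-1$, while its image under $\mathit{tran}_{2^{k-1}, \cdot}$ occupies rows $2^{k-1}, \ldots, 2^k-1$. Hence restricting a pattern of $\patset{H}_k$ to the bottom half $\{p_{ij} \mid i < 2^{k-1}\}$ returns exactly $T$. If $T_1 \cup \mathit{tran}_{2^{k-1},\Delta(T_1)+s_1}(T_1) = T_2 \cup \mathit{tran}_{2^{k-1},\Delta(T_2)+s_2}(T_2)$, comparing bottom halves forces $T_1 = T_2 =: T$, and then comparing top halves forces $\mathit{tran}_{2^{k-1},\Delta(T)+s_1}(T) = \mathit{tran}_{2^{k-1},\Delta(T)+s_2}(T)$, i.e.\ $s_1 \equiv s_2 \pmod{w}$.

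The one genuinely delicate step is precisely this last congruence, and it is where the standing hypothesis $w \geqslant 2$ is used: since $s_1, s_2 \in \{0,1\}$ satisfy $|s_1 - s_2| \leqslant 1 < w$, the congruence modulo $w$ forces $s_1 = s_2$, and the map is injective as required. The argument would fail for $w = 1$, where both choices of $s$ would produce identical translates and the FHT construction would collapse, which is exactly why $w \geqslant 2$ is imposed at the outset. No further case analysis is needed; the induction closes and gives $|\patset{H}_{h,w}| = w \cdot h$.
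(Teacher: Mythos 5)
Your proof is correct and follows exactly the route the paper intends: the paper omits the argument, stating only that these propositions ``may easily be proved using induction (condition $w\geqslant 2$ is important),'' and your induction — splitting a pattern of $\patset{H}_k$ into its bottom half (recovering $T$) and top half (recovering $s$ modulo $w$, hence $s$ itself since $w\geqslant 2$) — is precisely that argument, correctly locating the one place where $w\geqslant 2$ is needed. The only implicit step worth a half-sentence is that recovering $s$ from the translate uses that $T$ has exactly one pixel in its bottom row (so the translate's bottom pixel pins down the shift modulo $w$), which is immediate from the width-1 structure of the patterns.
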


\begin{proposition}\label{fht_volume_proposition}
$\forall\,T\in\patset{H}_{h,w}\:\;|T| = h$ and, hence, $V(\patset{H}_{h,w}) = w\cdot h^2$.\qed
\end{proposition}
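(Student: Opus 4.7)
The plan is to proceed by induction on the order $k$, mirroring the inductive definition of $\mathcal{H}_{k}$, and to verify that the two halves joined at each stage occupy disjoint sets of rows so their union's cardinality doubles cleanly. The formula for $V(\mathcal{H}_{h,w})$ then drops out of Proposition~\ref{fht_size_proposition} by a single multiplication.

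For the base case $k=0$ (so $h=1$), every $T\in\mathcal{H}_0$ is a singleton $\{p_{0j}\}$, hence $|T|=1=h$, as required. For the inductive step, assume that every $T\in\mathcal{H}_{k-1}$ has $|T|=2^{k-1}$ and, moreover, that all pixels of $T$ lie in rows $0,1,\dots,2^{k-1}-1$ (this auxiliary row-range claim is itself a trivial induction: $\mathit{tran}_{2^{k-1},\cdot}$ shifts each row index up by exactly $2^{k-1}$, so the union of $T$ and $\mathit{tran}_{2^{k-1},\Delta(T)+s}(T)$ lives in rows $0,\dots,2^{k}-1$). Given this, for a parent pattern $T\in\mathcal{H}_{k-1}$ and any $s\in\{0,1\}$, the two summands $T$ and $\mathit{tran}_{2^{k-1},\Delta(T)+s}(T)$ are contained in disjoint row strips, hence disjoint, and therefore
\[
  |T\cup\mathit{tran}_{2^{k-1},\Delta(T)+s}(T)| = |T| + |\mathit{tran}_{2^{k-1},\Delta(T)+s}(T)| = 2\cdot 2^{k-1} = 2^k,
\]
which is the required height at level $k$. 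This establishes the first assertion.

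For the volume computation, since every pattern in $\mathcal{H}_{h,w}$ has the same cardinality $h$,
\[
  V(\mathcal{H}_{h,w}) = \sum_{T\in\mathcal{H}_{h,w}} |T| = |\mathcal{H}_{h,w}|\cdot h = wh\cdot h = wh^2,
\]
using Proposition~\ref{fht_size_proposition} for the count $|\mathcal{H}_{h,w}|=wh$.

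There is no genuine obstacle here; the only subtlety worth noting is that the disjointness of $T$ and its translate relies on the row-range invariant, which must be carried along as part of the inductive hypothesis. The horizontal wrap-around via the $\mathit{mod}\,w$ in $\mathit{tran}_{a,b}$ plays no role in this argument, since disjointness is already forced by the vertical shift.
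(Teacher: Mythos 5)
Your proof is correct and is exactly the routine induction the paper has in mind: the paper states this proposition without proof, remarking only that such facts ``follow directly from the definition or may easily be proved using induction,'' and your argument (row-range invariant forcing disjointness of $T$ and its vertical translate, so cardinalities double at each stage, then multiply by $|\patset{H}_{h,w}|=wh$) is the intended one. No issues.
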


\begin{proposition}\label{sibling}
For every $T\in\patset{H}_{h,w}$ there is a unique $T'\in\patset{H}_{h,w}$ so that $T\cap I^h_b = T\cap T'$, such $T'$ is called a bottom-sibling of $T$. The same statement holds for $I^h_t$ (with top-siblings).\qed
\end{proposition}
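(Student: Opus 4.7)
The plan is to leverage the recursive definition of $\patset{H}_{h,w}$: each $T$ of height $h$ is obtained from a unique lower-half pattern $T_0 \in \patset{H}_{h/2,w}$ together with a shift choice $s \in \{0,1\}$, where $T_0 = T \cap I^h_b$ and $T \cap I^h_t = \mathit{tran}_{h/2,\Delta(T_0)+s}(T_0)$. Combined with proposition~\ref{fht_volume_proposition}, which says $|T| = h$ and hence every FHT pattern contains exactly one pixel per row, these are the only structural facts needed.

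For the bottom-sibling assertion I would decompose a candidate as $T' = T_0' \cup T'_{top}$ and split the condition $T \cap T' = T \cap I^h_b = T_0$ using disjointness of $I^h_b$ and $I^h_t$ into $T_0 \cap T_0' = T_0$ and $T_{top} \cap T'_{top} = \varnothing$. The first equation together with $|T_0| = |T_0'| = h/2$ forces $T_0' = T_0$. Since exactly two patterns in $\patset{H}_{h,w}$ have lower half $T_0$ — corresponding to $s = 0$ and $s = 1$ — this leaves at most one candidate for $T' \neq T$. To finish, I would verify that the top halves $\mathit{tran}_{h/2,\Delta(T_0)}(T_0)$ and $\mathit{tran}_{h/2,\Delta(T_0)+1}(T_0)$ are actually disjoint, which is immediate from the one-pixel-per-row property together with $w\geqslant 2$: in every row the two pixels sit at horizontally consecutive, hence distinct, columns.

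The top-sibling assertion is analogous but needs one preliminary lemma, namely that $\patset{H}_{h,w}$ is closed under horizontal cyclic translation $\mathit{tran}_{0,c}$ and that $\Delta$ is invariant under it; both follow by an easy induction on $h$ from the recursive definition. Given this, the condition $T \cap T' = T \cap I^h_t$ forces $T'_{top} = T_{top}$, and after shifting down by $h/2$ becomes $\mathit{tran}_{0,\Delta(T_0')+s'}(T_0') = \mathit{tran}_{0,\Delta(T_0)+s}(T_0)$. Since $\Delta$ is translation-invariant one has $\Delta(T_0') = \Delta(T_0)$, so the equation reduces to $T_0' = \mathit{tran}_{0,s-s'}(T_0)$; taking $s' = 1-s$ produces a unique valid FHT pattern $T_0' = \mathit{tran}_{0,\pm 1}(T_0)$, and the resulting $T'$ is genuinely distinct from $T$ because no nontrivial horizontal cyclic shift fixes a one-pixel-per-row pattern when $w\geqslant 2$. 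The only real obstacle in the argument is bookkeeping around the horizontal coordinates and $\Delta$; once the one-pixel-per-row property and translation invariance of $\Delta$ are in hand, uniqueness falls out almost mechanically.
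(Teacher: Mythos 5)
Your argument is correct. The paper itself offers no proof of this proposition --- it is stated with a q.e.d.\ mark as one of several facts that ``follow directly from the definition or may easily be proved using induction (condition $w\geqslant 2$ is important)'' --- and your write-up is a sound elaboration of exactly that intended argument: unfold the recursive definition, use the one-pixel-per-row property, and invoke $w\geqslant 2$ precisely where the paper hints it matters (a cyclic shift by $1\bmod w$ moves every pixel to a different column). The only step worth making explicit when you write it out is that in the top-sibling case the forced candidate $T_0'=\mathit{tran}_{0,\pm1}(T_0)$ really satisfies $T_0\cap T_0'=\varnothing$; this is the same consecutive-columns observation you already use for the bottom case, so it is not a gap, just a line to add.
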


\noindent For example, if the two leftmost patterns from fig.~\ref{fht_shapes} started from the same bottom pixel, they would be bottom-siblings.

\begin{lemma}\label{fht_lemma}
$r(\patset{H}_{h,w}) = h$.
\end{lemma}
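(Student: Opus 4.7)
The plan is to show $r(\patset{H}_{h,w}) = h$ by proving both inequalities. The lower bound is immediate from Proposition~\ref{fht_volume_proposition}: for any single $T \in \patset{H}_{h,w}$, $S(\{T\}) = 1\cdot|T| = h$, so $r(\patset{H}_{h,w}) \geq h$.

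For the upper bound I would induct on $d = \log_2 h$, with the case $d=0$ trivial since patterns are single pixels. For the inductive step, let $\patset{R} \subseteq \patset{H}_{h,w}$ have nonempty common intersection $X$; write $n = |\patset{R}|$, split $X = X_b \cup X_t$ along $I^h_b$ and $I^h_t$, and set $k_b = |X_b|$, $k_t = |X_t|$. Let $\patset{R}_b = \{T \cap I^h_b : T \in \patset{R}\}$ and similarly $\patset{R}_t$. By Proposition~\ref{restriction} these families sit inside FHT pattern sets of height $h/2$ with common intersections exactly $X_b$ and $X_t$, so the inductive hypothesis yields $|\patset{R}_b|\,k_b \leq h/2$ and $|\patset{R}_t|\,k_t \leq h/2$. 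Proposition~\ref{sibling} alone only supplies $|\patset{R}_b|,|\patset{R}_t| \geq n/2$, and naively combining the two inductive estimates would give $nk \leq 2h$; the main obstacle is to eliminate this spurious factor of two.

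The key geometric observation that closes the gap is that \emph{bottom-siblings have disjoint top halves}. Indeed, two bottom-siblings with common bottom $B$ are built (by the inductive definition of $\patset{H}_k$) with top halves $\mathit{tran}_{h/2,\,\Delta(B)}(B)$ and $\mathit{tran}_{h/2,\,\Delta(B)+1}(B)$, i.e.\ cyclic horizontal translates by a single column; since $B$ has exactly one pixel per row and $w \geq 2$, these translates share no pixel. The symmetric statement holds for top-siblings. Consequently, if $k_t > 0$ then $\patset{R}$ can contain no bottom-sibling pair (such a pair would make $X_t$ empty), forcing $|\patset{R}_b| = n$; symmetrically $k_b > 0$ forces $|\patset{R}_t| = n$. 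The induction then closes in three cases: if $k_t = 0$ (resp.\ $k_b = 0$) only the bottom (resp.\ top) restriction is used together with the sibling bound $|\patset{R}_b| \geq n/2$, giving $nk = nk_b \leq 2\cdot h/2 = h$; and if $k_b, k_t > 0$ the disjointness lemma gives $|\patset{R}_b| = |\patset{R}_t| = n$, whence $nk = nk_b + nk_t \leq h/2 + h/2 = h$.
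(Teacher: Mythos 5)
Your proof is correct and takes essentially the same route as the paper's: the same induction on the order, the same three-way case split according to whether the common intersection lies in the bottom half, the top half, or meets both, and the same use of the sibling structure (the factor-two bound $|\patset{R}_b|\geqslant n/2$ in the one-sided cases, and the disjointness of sibling top halves forcing $|\patset{R}_b|=|\patset{R}_t|=n$ in the mixed case). The only difference is presentational: you re-derive from the inductive definition of $\patset{H}_k$ what the paper reads off directly from Proposition~\ref{sibling}.
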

\begin{proof}
$r(\patset{H}_{h,w}) \geqslant h$ because by proposition~\ref{fht_volume_proposition} for any $T\in\patset{H}_{h,w}$ intersection area $S(\{T\}) = h$. So it is enough to prove that $r(\patset{H}_{h,w}) \leqslant h$. 

We will use induction over $k$ from the definition of $\patset{H}_{h,w}$. The induction base is obvious:  $r(\patset{H}_0) = 1$. Supposing now that $r(\patset{H}_k) \leqslant 2^k$ we will show that $r(\patset{H}_{k+1}) \leqslant 2^{k+1}$ thus proving the induction step.

Indeed, let $\patset{M}\subseteq\patset{H}_{k+1}$ be a set of $q$ patterns with common pixels $P = \bigcap\limits_{T\in\patset{M}}T$ and $|P| = p > 0$. We need to prove that $p\cdot q \leqslant 2^{k+1}$. Denote upper and lower images $I_b=I^{2^{k+1}}_b, I_t=I^{2^{k+1}}_t$ and the corresponding pattern set restrictions $\patset{M}_b = \{T\cap I_b \mid T\in\patset{M}\}$, $\patset{M}_t = \{T\cap I_t \mid T\in\patset{M}\}$ and consider three cases.

First assume that $P\subseteq I_b$. By proposition~\ref{sibling} at most two bottom siblings in $\patset{M}$ may produce the same element in $\patset{M}_b$,  which yields $q\leqslant 2|\patset{M}_b|$. By induction hypothesis, $S(\patset{M}_b) = p\cdot|\patset{M}_b| \leqslant 2^k$, so $S(\patset{M}) = p \cdot q \leqslant 2 p \cdot |\patset{M}_b| \leqslant 2^{k+1}$.

Secondly, assume that $P\subseteq I_t$. In this case the proof is same as above, we just need to note proposition~\ref{restriction} to use the induction hypothesis.

Finally, assume that both $p_b = |P\cap I_b| > 0$ and $p_t = |P\cap I_t| > 0$. It follows from proposition~\ref{sibling} that in this case $\patset{M}$ contains neither top- nor bottom-siblings and, hence, 
$|\patset{M}_b| = |\patset{M}_t| = q$. By separately applying induction hypothesis to these two sets (and again noting proposition~\ref{restriction}) we get $p_b\cdot q \leqslant 2^k$ and $p_t\cdot q \leqslant 2^k$. For $p = p_b + p_t$ we also get $S(\patset{M}) = p \cdot q \leqslant 2^{k+1}$.

Since $\patset{M}$ has been chosen arbitrarily, we have $r(\patset{H}_{k+1}) \leqslant 2^{k+1}$. The induction step is proven and we get $r(\patset{H}_{h,w}) = r(\patset{H}_d) \leqslant 2^d = h$.

\end{proof}

Remember that we assumed $h=2^d$ for $d\in\mathbb{N}$. Let us further denote $\underline h=2^{\lfloor \log_2 h \rfloor}$ and $\overline h=2^{\lceil \log_2 h \rceil}$, i.e. $\underline h$ and $\overline h$ are the powers of two closest to $h$ from below and above resp. We can extend the definition of $\patset{H}_{h, w}$ for image $I$ with arbitrary $h\in\mathbb{N}$ in the following way:

$$\patset{H}_{h, w} = \{T\cap I \mid T\in \patset{H}_{\overline h, w} \}$$

From proposition~\ref{subimage_prop} follows

\begin{proposition}\label{fht_any_h}
$\mathrm{OR}(\patset{H}_{h,w})\geqslant\mathrm{OR}(\patset{H}_{\underline h, w})$.\qed
\end{proposition}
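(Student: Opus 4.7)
The plan is to apply proposition~\ref{subimage_prop} directly, using a well-chosen subimage of the height-$h$ image $I$, and then identify the resulting restricted pattern set with $\patset{H}_{\underline h, w}$ via proposition~\ref{restriction}.

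First I would observe that the claim is trivial when $h$ is itself a power of two, since then $\underline h = h = \overline h$ and $\patset{H}_{h,w}$ equals $\patset{H}_{\underline h, w}$ by definition (intersecting with $I$ does nothing). So assume $\underline h < h < \overline h = 2\underline h$. Let $I_0 \subseteq I$ be the bottom $\underline h$ rows of $I$; this is a subimage of $I$ because $\underline h \leqslant h$. Form $\patset{M} = \{T \cap I_0 \mid T \in \patset{H}_{h,w}\}$. By proposition~\ref{subimage_prop}, $\mathrm{OR}(\patset{M}) \leqslant \mathrm{OR}(\patset{H}_{h,w})$.

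Next I would identify $\patset{M}$ with $\patset{H}_{\underline h, w}$. By the extended definition, every $T \in \patset{H}_{h,w}$ has the form $T = T' \cap I$ for some $T' \in \patset{H}_{\overline h, w}$, and since $I_0 \subseteq I$,
$$
T \cap I_0 = T' \cap I \cap I_0 = T' \cap I_0.
$$
Hence $\patset{M} = \{T' \cap I_0 \mid T' \in \patset{H}_{\overline h, w}\}$. But $I_0$ is exactly the bottom image $I^{\overline h}_b$ of the height-$\overline h$ image (since $\overline h / 2 = \underline h$), so proposition~\ref{restriction} identifies this restriction with $\patset{H}_{\overline h / 2, w} = \patset{H}_{\underline h, w}$, after the obvious relabeling of the pixels of $I_0$ as pixels of a height-$\underline h$ image. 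Combining the two displayed inequalities gives $\mathrm{OR}(\patset{H}_{\underline h, w}) \leqslant \mathrm{OR}(\patset{H}_{h,w})$, as required.

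There is no real obstacle here; the only subtle point is making sure the two restrictions compose correctly, i.e. verifying the identity $T \cap I_0 = T' \cap I_0$ above, and observing that the bottom $\underline h$ rows of $I$ coincide with the bottom image of the ambient height-$\overline h$ image to which proposition~\ref{restriction} is being applied.
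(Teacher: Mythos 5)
Your proof is correct and follows exactly the route the paper intends: the paper derives this proposition directly from proposition~\ref{subimage_prop}, and your argument simply spells out the details (choosing $I_0$ as the bottom $\underline h$ rows, checking $T\cap I_0 = T'\cap I_0$, and invoking proposition~\ref{restriction} to identify the restricted set with $\patset{H}_{\underline h, w}$) that the paper leaves implicit.
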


By combining theorem~\ref{bound_theorem}, lemma~\ref{fht_lemma} and propositions~\ref{fht_size_proposition}, \ref{fht_volume_proposition} and \ref{fht_any_h} we get

\begin{theorem}\label{FHT_OR_theorem}
$\mathrm{OR}(\patset{H}_{h,w}) \geqslant 3 w \underline h \log_3 \underline h$.\qed
\end{theorem}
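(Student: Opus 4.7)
The plan is to simply substitute the known quantities into the general bound from Theorem~\ref{bound_theorem} and then handle the extension from powers of two to arbitrary heights. First I would treat the case when $h = 2^d$ is a power of two, where Lemma~\ref{fht_lemma} and Propositions~\ref{fht_size_proposition}, \ref{fht_volume_proposition} apply directly and give $V(\patset{H}_{h,w}) = w h^2$, $|\patset{H}_{h,w}| = w h$, and $r(\patset{H}_{h,w}) = h$.

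Plugging these three values into Theorem~\ref{bound_theorem}, I expect the ratio $V/r$ to simplify to $w h$ and the argument of the logarithm $V/|\patset{T}|$ to simplify to $h$, producing the bound $\mathrm{OR}(\patset{H}_{h,w}) \geqslant 3 w h \log_3 h$. Since in this power-of-two case $\underline h = h$, this is exactly the claimed inequality.

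Finally, for arbitrary $h \in \mathbb{N}$, I would invoke Proposition~\ref{fht_any_h}, which yields $\mathrm{OR}(\patset{H}_{h,w}) \geqslant \mathrm{OR}(\patset{H}_{\underline h, w})$. Because $\underline h$ is a power of two by construction, the previous paragraph applies to $\patset{H}_{\underline h, w}$ and gives $\mathrm{OR}(\patset{H}_{\underline h, w}) \geqslant 3 w \underline h \log_3 \underline h$, completing the chain of inequalities.

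There is essentially no obstacle at this stage — all of the non-trivial work sits in Lemma~\ref{fht_lemma} (the tight estimate of the self-intersection measure) and in the external area-bounding Theorem~\ref{bound_theorem}. The only things worth double-checking are the algebraic cancellations $V/r = wh^2/h = wh$ and $V/|\patset{T}| = wh^2/(wh) = h$, and the consistency of the reduction to $\underline h$; both are immediate.
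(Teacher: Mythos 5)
Your proposal is correct and matches the paper's own (one-line) proof exactly: the theorem is obtained by substituting $V(\patset{H}_{h,w}) = wh^2$, $|\patset{H}_{h,w}| = wh$, and $r(\patset{H}_{h,w}) = h$ into Theorem~\ref{bound_theorem} and then passing to arbitrary $h$ via Proposition~\ref{fht_any_h}. The algebraic simplifications you flag are indeed all that is needed.
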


Remember that the fast Hough transform was originally designed to be calculated in $\Theta(n^2 \log n)$ operations on a $n\times n$ image. In our model it can be strictly formulated with the following proposition. For consistency and because it is easy and does not require much space, we provide its proof from~\cite{brady} adapted to our model.

\begin{proposition}[FHT algorithm]\label{fht_algorithm}
$$\mathrm{SUM2}(\patset{H}_{h,w}) \leqslant w \overline h\log_2 \overline h.$$
\end{proposition}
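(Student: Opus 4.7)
My plan is to first reduce to the case $h = 2^d$ using Proposition~\ref{subimage_prop}: since by construction $\patset{H}_{h,w}$ is the restriction of $\patset{H}_{\overline{h},w}$ to the $h$-row subimage of an image of height $\overline{h}$, we have $\mathrm{SUM2}(\patset{H}_{h,w}) \leqslant \mathrm{SUM2}(\patset{H}_{\overline{h},w})$, so it suffices to establish $\mathrm{SUM2}(\patset{H}_{2^d,w}) \leqslant w \cdot 2^d \cdot d$ for every $d \in \mathbb{N}$.

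For this I would induct on $d$, mirroring the recursive definition of $\patset{H}_d$. The base $d=0$ is trivial because $\patset{H}_0$ consists of single pixels and needs no additions. For the step, split the image of height $2^d$ into its bottom half and top half, each of height $2^{d-1}$. By Proposition~\ref{restriction}, the pattern sets obtained from $\patset{H}_d$ by restriction to either half are (after an obvious renumbering) copies of $\patset{H}_{d-1}$, and can therefore be computed by two independent subcircuits, each of size at most $\mathrm{SUM2}(\patset{H}_{2^{d-1},w})$ by the induction hypothesis. By the defining formula for $\patset{H}_d$, every $T' \in \patset{H}_d$ has the form $T \cup \mathit{tran}_{2^{d-1},\,\Delta(T)+s}(T)$ for some $T \in \patset{H}_{d-1}$ and $s\in\{0,1\}$; the first summand is already available as an output of the bottom-half subcircuit, while the second is available as an output of the top-half subcircuit (noting that $\patset{H}_{d-1}$, being the full FHT pattern set of order $d-1$, contains every shape starting at every bottom position modulo $w$). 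Since the two halves are disjoint subsets of the image, a single binary addition produces $T'$.

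Using Proposition~\ref{fht_size_proposition} to count $|\patset{H}_d| = w \cdot 2^d$ final additions, we obtain the recurrence
$$
\mathrm{SUM2}(\patset{H}_{2^d,w}) \;\leqslant\; 2\,\mathrm{SUM2}(\patset{H}_{2^{d-1},w}) + w\cdot 2^d,
$$
which telescopes to $d\,w\,2^d = \overline{h}\,w\,\log_2\overline{h}$. The proof is essentially just a formalisation of the FHT algorithm itself; the only point requiring care is verifying the closure of $\patset{H}_{d-1}$ under the horizontal cyclic translations $\mathit{tran}_{0,\Delta(T)+s}$ invoked in the top half, but this is immediate from the definition since the translated pattern still starts at a valid bottom pixel.
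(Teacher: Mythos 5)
Your proof is correct and follows essentially the same route as the paper's: both replicate the order-$(d-1)$ circuit on the two image halves, add one binary gate per pattern of $\patset{H}_d$ (using $|\patset{H}_d|=w2^d$), and invoke Proposition~\ref{subimage_prop} to pass to arbitrary $h$; your recurrence $f(d)\leqslant 2f(d-1)+w2^d$ is just the paper's direct gate count $w(k+1)2^{k+1}=wk2^{k+1}+w2^{k+1}$ written recursively. The closure of $\patset{H}_{d-1}$ under horizontal cyclic translation, which you rightly flag as the one point needing care, is also left implicit in the paper's proof.
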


\begin{proof}
Denoting $d=\lceil \log_2 h \rceil\in\mathbb{N}$ we will inductively construct a fanin-2 circuit that computes $\patset{H}_{\overline h, w}=\patset{H}_d$ with $w d\: 2^d = w \overline h \log_2 \overline h$ gates. For $k=0$ the circuit computing $\patset{H}_0$ consists only of input nodes and thus has zero gates, providing the induction base. Suppose now for $k\geqslant 1$ there is a circuit computing $\patset{H}_k$ with $w k2^k$ gates. Let us replicate it for the top and bottom image halves thus producing $w k2^{k+1}$ gates. Then, following the FHT patterns definition, for every $T\in\patset{H}_k$ we insert two more gates computing its successors $T', T''\in\patset{H}_{k+1}$. For $|\patset{H}_k| = w2^k$, this produces $w2^{k+1}$ more nodes and finishes the circuit. The total number of nodes in it is $wk2^{k+1} + w2^{k+1}=w(k+1)2^{k+1}$. We end the proof by applying proposition~\ref{subimage_prop}.
\end{proof}

The next theorem provides explicit upper and lower bounds on FHT patterns complexity (it looks best for $h=2^d$).

\begin{theorem}\label{FHT_tight_bounds}
For $L\in\{\mathrm{OR}, \mathrm{SUM}\}$,

$$
\log_3 8 \; w \underline h \log_2 \underline h \leqslant L(\patset{H}_{h,w}) \leqslant 2 w \overline h \log_2 \overline h,
$$
and
$$
\log_9 8 \; w \underline h \log_2 \underline h \leqslant L2(\patset{H}_{h,w}) \leqslant w \overline h \log_2 \overline h.
$$
\end{theorem}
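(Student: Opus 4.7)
The theorem is essentially a consolidation of results established earlier, so the plan is to carefully stitch them together while tracking the logarithm bases. I do not expect any genuine obstacle; the only mild subtlety is keeping the ``floor'' and ``ceiling'' powers of two straight and converting $\log_3$ to $\log_2$.

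First, I would handle the lower bound on $L(\patset{H}_{h,w})$. Theorem~\ref{FHT_OR_theorem} gives $\mathrm{OR}(\patset{H}_{h,w}) \geqslant 3w\underline h \log_3 \underline h$. Using $3\log_3 \underline h = 3\log_3 2 \cdot \log_2 \underline h = \log_3 8 \cdot \log_2 \underline h$, this rewrites exactly as the claimed bound. Since Proposition~\ref{sum_or_ineq} gives $\mathrm{SUM}(\patset{H}_{h,w}) \geqslant \mathrm{OR}(\patset{H}_{h,w})$, the same bound holds for both values of $L$.

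Second, for the upper bound on $L(\patset{H}_{h,w})$, I would start from Proposition~\ref{fht_algorithm}: $\mathrm{SUM2}(\patset{H}_{h,w}) \leqslant w\overline h\log_2 \overline h$. The circuit constructed there is a fanin-2 additive circuit, so every gate contributes at most two edges, giving $\mathrm{SUM}(\patset{H}_{h,w}) \leqslant 2\,\mathrm{SUM2}(\patset{H}_{h,w}) \leqslant 2w\overline h\log_2 \overline h$. Proposition~\ref{sum_or_ineq} then yields $\mathrm{OR}(\patset{H}_{h,w}) \leqslant \mathrm{SUM}(\patset{H}_{h,w})$, finishing the upper half of the first inequality.

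Third, the two bounds on $L2(\patset{H}_{h,w})$ are immediate consequences. The upper bound $L2(\patset{H}_{h,w}) \leqslant w\overline h\log_2 \overline h$ is Proposition~\ref{fht_algorithm} for SUM2 combined with $\mathrm{OR2}\leqslant\mathrm{SUM2}$ from Proposition~\ref{sum_or_ineq}. The lower bound follows from Proposition~\ref{sum2_proposition}: $L2(\patset{H}_{h,w}) \geqslant \tfrac12 L(\patset{H}_{h,w}) \geqslant \tfrac12\log_3 8\cdot w\underline h\log_2 \underline h$, and the identity $\tfrac12\log_3 8 = \tfrac{\log 8}{2\log 3} = \tfrac{\log 8}{\log 9} = \log_9 8$ puts it in the stated form. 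The proof is then complete by an appeal to the appropriate combination of Theorem~\ref{FHT_OR_theorem}, Propositions~\ref{sum2_proposition}, \ref{sum_or_ineq}, and~\ref{fht_algorithm}.
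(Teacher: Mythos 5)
Your proposal is correct and follows essentially the same route as the paper: each of the four inequalities is obtained from the same combination of Theorem~\ref{FHT_OR_theorem}, Propositions~\ref{sum2_proposition}, \ref{sum_or_ineq} and \ref{fht_algorithm}, with the identical base-change identities $3\log_3 2=\log_3 8$ and $\tfrac12\log_3 8=\log_9 8$. You merely spell out the "elementary logarithm properties" that the paper leaves implicit.
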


\begin{proof}
After applying elementary logarithm properties, the upper right inequality follows from propositions~\ref{sum2_proposition} and \ref{fht_algorithm}, the upper left -- from proposition~\ref{sum_or_ineq} and theorem~\ref{FHT_OR_theorem}, the lower right -- from propositions~\ref{sum_or_ineq} and \ref{fht_algorithm} and the lower left -- from proposition~\ref{sum2_proposition} and theorem~\ref{FHT_OR_theorem}.
\end{proof}

Finally, summarizing and omitting the coefficients we can state

\begin{proposition}
For $L\in\{\mathrm{OR}, \mathrm{SUM}, \mathrm{OR2}, \mathrm{SUM2}\}$,
$L(\patset{H}_{n,n}) = \Theta(n^2 \log n)$.\qed
\end{proposition}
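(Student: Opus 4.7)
The plan is to derive this statement as an immediate corollary of theorem~\ref{FHT_tight_bounds}, specialized to the square image case $h=w=n$. The only thing that actually needs checking is that replacing $\underline{n}$ and $\overline{n}$ by $n$ inside the bounds costs only constant factors, which is then absorbed by the $\Theta$ notation.

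First I would recall the definitions $\underline{n}=2^{\lfloor\log_2 n\rfloor}$ and $\overline{n}=2^{\lceil\log_2 n\rceil}$, from which the elementary inequalities $\tfrac{1}{2} n < \underline{n}\leqslant n\leqslant \overline{n}<2n$ follow. Consequently $\underline{n}=\Theta(n)$ and $\overline{n}=\Theta(n)$, and likewise $\log_2\underline{n}=\log_2 n - O(1)$ and $\log_2\overline{n}=\log_2 n + O(1)$, so both are $\Theta(\log n)$ for $n\geqslant 2$.

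Next, I would specialize theorem~\ref{FHT_tight_bounds} to $h=w=n$. For $L\in\{\mathrm{OR},\mathrm{SUM}\}$ it gives
$$\log_3 8 \cdot n\,\underline{n}\log_2\underline{n} \leqslant L(\patset{H}_{n,n}) \leqslant 2\,n\,\overline{n}\log_2\overline{n},$$
and substituting the $\Theta$-estimates of the previous paragraph, both sides are $\Theta(n^2\log n)$, proving the claim for $\mathrm{OR}$ and $\mathrm{SUM}$. The case $L\in\{\mathrm{OR2},\mathrm{SUM2}\}$ is identical, using instead the second pair of inequalities in theorem~\ref{FHT_tight_bounds} (which differ only in constant factors: $\log_9 8$ and $1$ in place of $\log_3 8$ and $2$).

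There is no real obstacle here — the proposition is explicitly a summary line, and the only ``content'' is the rounding argument that shows $\underline{n}$, $\overline{n}$ and $n$ agree up to a factor of two. Accordingly I would write the proof as a single short paragraph invoking theorem~\ref{FHT_tight_bounds} and the observation about $\underline{n},\overline{n}$.
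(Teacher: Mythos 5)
Your proposal is correct and matches the paper's intent exactly: the proposition is stated with an immediate \qed as a summary of Theorem~\ref{FHT_tight_bounds}, and the only content is the observation that $\tfrac12 n < \underline{n} \leqslant n \leqslant \overline{n} < 2n$, so all bounds collapse to $\Theta(n^2\log n)$. Nothing is missing.
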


\section{Strip patterns}\label{strip_section}

We will now consider a generic class of strip pattern sets consisting of patterns which have a limited distance to some ``ideal'' straight line and do not form too ``dense'' congestions, the classical Hough transform~(\ref{hough}) being a typical example. As in the previous section we work only with one type of lines (this time mostly horizontal with inclination to the right) which is enough for lower bounds because of proposition~\ref{subpattern_prop}. It is worth noting that FHT patterns from the previous section reach logarithmic deviation from ideal straight lines~\cite{FHT_approx} and thus cannot be reduced to this case.

The following embedding of pixels into $\mathbb{R}^2$ will be used here. Pixel $p_{ij}$ corresponds to integer point $(i,j)\in\mathbb{R}^2$. We use the same $p_{ij}$ notation for these points if it does not lead to confusion.

\begin{definition}
For line $l\subset\mathbb{R}^2$ and $C>0$ $C$-strip or strip of width $C$ is the set
$s(l, C) = \{r\in\mathbb{R}^2 \mid \rho(r, l) \leqslant C/2 \}$. If $C$ is known from the context or not important, this strip is shortly denoted as $s(l)$.
\end{definition}

\begin{definition}
Line $l\subset\mathbb{R}^2$ is mostly horizontal inclined to the right if it has equation of form $y = ax + b$ with slope $0\leqslant a\leqslant1$. In this case strip $s(l)$ is also called mostly horizontal inclined to the right with slope $a$.
\end{definition}

\begin{definition}
Mostly horizontal inclined to the right line $l$ is integer at image $I$ with width $w$ if its slope $a = \frac{e}{w-1}$ for some $e\in\{0, 1, ..., w-1\}$ called elevation.
\end{definition}

\begin{definition}
Patterns $T\in\patset{M}$ are $C$-parallel if there is a family of integer lines $\{l(T) \mid T\in\patset{M}\}$ with the same slope that $\forall T\;T\subset s(l(T), C)$. 
\end{definition}

\noindent We can now describe the specific pattern set class which we study and then obtain the self-intersection bound for it. ``Strangely looking'' item~\ref{strip_patset_def_Q} of the following definition states that there cannot be more than a constant number $Q$ of intersecting patterns covered by strips of the same slope. This is of course true for all standard line plotting algorithms: if $C$ is chosen small and a line is shifted up or down by a few pixels, it would not intersect with itself. The idea behind the $r$-bound proof (lemma~\ref{strip_r_lemma}) is quite simple: if we have a lot of intersecting lines, two of them will have differing enough slopes thus limiting the intersection area and with a few lines an even simpler bound works.

\begin{definition}
Pattern set $\patset{L}$ is a strip pattern set of width $C > 0$ and density $Q\in\mathbb{N}$ if:
\begin{enumerate}
    \item For any pattern $L\in\patset{L}$ there is an integer line $l$ so that $L\subset s(l, C)$,
    \item For any $C$-parallel subset $\patset{M}\subseteq\patset{L}$ with $\bigcap\limits_{L\in\patset{M}}L \ne \varnothing$ holds $|\patset{M}| \leqslant Q$.\label{strip_patset_def_Q}
\end{enumerate}
\end{definition}

\begin{remark}
The ``integer'' property of lines above is not important and is imposed only to shorten the proofs. Indeed, for any mostly horizontal inclined to the right line $l$ there is an integer line $l'$ and constant $C'$ so that $s(l, C)\subseteq s(l', C')$ and $C'$ depends only on $C$.
\end{remark}

\begin{definition}
Integer cardinality of a set $D\subseteq\mathbb{R}^2$ is the number 
$N(D) = |\mathbb{Z}^2 \cap D|$ of integer points inside it.
\end{definition}

\begin{proposition}\label{rect_prop}
Rectangle $R$ with sides $a$ and $b$ has $N(R)\leqslant(a+\sqrt 2)(b+\sqrt 2)$.
\end{proposition}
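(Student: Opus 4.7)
My plan is a volume-packing argument. To each integer point $p \in R \cap \mathbb{Z}^2$ I would associate the closed axis-aligned unit square $Q_p = p + [-1/2, 1/2]^2$. Since any two distinct integer points differ by at least $1$ in some coordinate, the squares $Q_p$ have pairwise disjoint interiors, so the area of $\bigcup_p Q_p$ equals $N(R)$. The task thus reduces to enclosing this union inside some planar region of area at most $(a+\sqrt{2})(b+\sqrt{2})$, and the natural choice is a slight enlargement of $R$ itself.

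Concretely, I would take $R'$ to be the rectangle obtained from $R$ by pushing each of its four sides outward by $\sqrt{2}/2$ along the two orthogonal directions of $R$'s sides; this gives dimensions $(a+\sqrt{2}) \times (b+\sqrt{2})$. To check $Q_p \subseteq R'$ for $p \in R$, let $u, v$ be orthonormal vectors along the sides of $R$. For any $q \in Q_p$ we have $|q-p| \leq \sqrt{2}/2$ (half the diagonal of $Q_p$), hence by Cauchy-Schwarz $|(q-p)\cdot u| \leq \sqrt{2}/2$ and $|(q-p)\cdot v| \leq \sqrt{2}/2$. Since the projection of $p$ onto $u$ lies in an interval of length $a$ and onto $v$ in one of length $b$, the projection of $q$ lies in intervals extended by $\sqrt{2}/2$ at each end, placing $q$ in $R'$. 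Combining disjointness with containment yields $N(R) \leq (a+\sqrt{2})(b+\sqrt{2})$.

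The one subtlety worth flagging is that the $Q_p$ must be taken axis-aligned (rather than aligned with $R$) in order to inherit disjointness from the $\mathbb{Z}^2$ lattice structure, while the enclosing rectangle $R'$ must be aligned with $R$ in order to produce the factored bound with the correct sides $a+\sqrt{2}$ and $b+\sqrt{2}$. The Cauchy-Schwarz step is precisely the bridge between these two frames of reference, and it relies only on the fact that $Q_p$ has diameter $\sqrt{2}$ independently of orientation; no other step presents a genuine obstacle.
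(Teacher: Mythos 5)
Your proposal is correct and follows the same argument as the paper: pack disjoint unit squares centered at the integer points of $R$ and enclose them in the rectangle obtained by pushing each side of $R$ outward by $\sqrt{2}/2$. The only difference is that you spell out the containment step (via projections onto $R$'s frame) that the paper dismisses as obvious.
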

\begin{proof}
Draw $N(R)$ horizontally aligned squares with unit area centered at the integer points contained in $R$. Obviously, rectangle $R'$ obtained by moving each $R$ side outward by half of the unit square diagonal length ($\frac{\sqrt{2}}{2}$), will contain all the marked squares and hence its area is not less than $N(R)$.
\end{proof}

\begin{lemma}
For mostly horizontal inclined to the right $C$-strips $s(l)$ and $s(k)$ with different slopes $a$ and $b$
$$ N(s(l)\cap s(k)) \leqslant \frac{C_1}{|a-b|}, $$
where $C_1>0$ depends only on $C$.
\end{lemma}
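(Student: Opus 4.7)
The plan is to enclose $s(l)\cap s(k)$ in a rectangle whose side lengths depend only on $C$ and $|a-b|$, and then invoke proposition~\ref{rect_prop}. A useful preliminary observation is that, although that proposition is stated for a rectangle $R$, its proof places no restriction on orientation: the argument covers each integer point of $R$ by an axis-aligned unit square and enlarges $R$ outward by $\sqrt{2}/2$, which still contains all of these squares no matter how $R$ is tilted. So I may freely work with rotated bounding rectangles.

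Next I would rotate the plane so that $l$ becomes horizontal. In the rotated coordinates $(u,v)$ we have $s(l) = \{|v|\le C/2\}$, while $s(k)$ is a strip whose defining inequality takes the form $|{-u\sin\theta + v\cos\theta - d}|\le C/2$, where $\theta$ is the angle between $l$ and $k$ and $d$ is a constant. Substituting $|v|\le C/2$ and solving for $u$ shows that for each fixed $v$ the $u$-slice of $s(l)\cap s(k)$ has length $C/\sin\theta$, while the centres of these slices shift by at most $C\cos\theta/\sin\theta\le C/\sin\theta$ as $v$ ranges over $[-C/2,\,C/2]$. Hence $s(l)\cap s(k)$ fits inside a rectangle (rotated, when viewed in the original frame) of side lengths at most $2C/\sin\theta$ and $C$.

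Applying proposition~\ref{rect_prop} to this rectangle gives
\[
N(s(l)\cap s(k)) \le \bigl(2C/\sin\theta + \sqrt{2}\bigr)\bigl(C + \sqrt{2}\bigr).
\]
To convert $\sin\theta$ into $|a-b|$ I would use the identity $\sin\theta = |a-b|/\sqrt{(1+a^2)(1+b^2)}$ together with $a,b\in[0,1]$ to obtain $\sin\theta \ge |a-b|/2$. Since $|a-b|\le 1$, the additive $\sqrt{2}$ terms in the product above can then be absorbed into the multiplicative constant, producing a bound of the desired form $C_1/|a-b|$ with $C_1$ depending only on $C$.

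I do not expect any serious obstacle. The only nontrivial step is the orientation-independent use of proposition~\ref{rect_prop}, which is really a remark on its proof; the remaining ingredients are elementary trigonometry and careful book-keeping of constants.
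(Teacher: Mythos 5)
Your argument is correct and follows essentially the same route as the paper: both enclose $s(l)\cap s(k)$ in a tilted rectangle with one side of length $O(C)$ and the other of length $O(C/|a-b|)$ and then invoke Proposition~\ref{rect_prop}, whose proof is indeed orientation-independent (the paper itself applies it to the tilted rectangle $PH_1RH_2$). The only difference is cosmetic: you obtain the bounding rectangle by rotating coordinates and slicing, converting angle to slope difference via $\sin\theta \geqslant |a-b|/2$, whereas the paper builds the rectangle from the rhombus of intersection and works with $\tan\gamma=(b-a)/(1+ab)$.
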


\begin{proof}
Let $\alpha = \arctan a$, $\beta = \arctan b$ and suppose $\beta > \alpha$. The strips sides intersect at angle $\gamma = \beta-\alpha$ and form a rhombus $PQRS$ (fig.~\ref{rhombus}). After dropping perpendiculars $PH_1$ on $QR$ and $RH_2$ on $PS$ we get a rectangle $PH_1RH_2$ with side $PH_1=2C$.

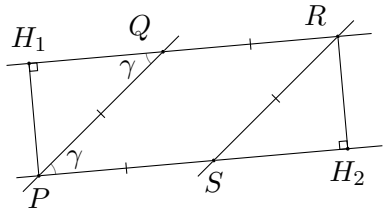
\begin{figure}[h]
\begin{center}

\begin{tikzpicture}
  [
    scale=1.5,
    point/.style = {draw, circle,  fill = black, inner sep = .1pt},
  ]
  
\def \alpha {5}
\def \beta  {45}
\def \w   {1}
\def \side {1.55572}   

\node (P) at (0:0) [point, label = below:$P$] {};
\node (Q) at ($ (P) + (\beta:\side) $) [point, label = above left:$Q$] {};
\node (R) at ($ (Q) + (\alpha:\side) $) [point, label = above left:$R$] { };
\node (S) at ($ (R) - (\beta:\side) $) [point, label = below:$S$]{};

\node (H1) at ($ (P) + (\alpha+90:\w) $) [point, label=above:$H_1$] {};
\node (H2) at ($ (R) + (\alpha+270:\w) $) [point, label=below:$H_2$] {};

\draw (P) -- (Q) -- (R) -- (S) -- (P);
\draw (Q) -- +(\beta:0.2);
\draw (R) -- +(\beta:0.2);
\draw (P) -- +(180+\beta:0.2);
\draw (S) -- +(180+\beta:0.2);
\draw (R) -- +(\alpha:0.3);
\draw (S) -- +(\alpha:1.5);
\draw (Q) -- (H1);
\draw (H1) -- +(180+\alpha:0.2);
\draw (P) -- +(180+\alpha:0.2);

\draw (P) -- (H1);
\draw (R) -- (H2);

\draw[rotate around={\alpha-90:(H1)}] (H1) rectangle +(0.07,0.07);
\draw[rotate around={\alpha+90:(H2)}] (H2) rectangle +(0.07,0.07);

\coordinate (PP) at (P);
\coordinate (SS) at (S);
\coordinate (QQ) at (Q);
\coordinate (HH1) at (H1);

\begin{scope}
\path[clip] (PP) -- (SS) -- (QQ);
\draw[opacity=0.5, draw=black] (P) circle (1.5mm);
\node at ($(P) + (25:3.5mm)$) {$\gamma$};
\end{scope}

\begin{scope}
\path[clip] (QQ) -- (HH1) -- (PP);
\draw[opacity=0.5, draw=black] (Q) circle (1.5mm);
\node at ($(Q) + (207:3.5mm)$) {$\gamma$};
\end{scope}

\coordinate (PQ) at ($(P)!0.5!(Q)$);
\draw[rotate around={90+\beta:(PQ)}]   ($ (PQ) -(0.05,0) $) -- ($ (PQ) +(0.05,0) $);
\coordinate (QR) at ($(Q)!0.5!(R)$);
\draw[rotate around={90+\alpha:(QR)}]   ($ (QR) -(0.05,0) $) -- ($ (QR) +(0.05,0) $);
\coordinate (RS) at ($(R)!0.5!(S)$);
\draw[rotate around={90+\beta:(RS)}]   ($ (RS) -(0.05,0) $) -- ($ (RS) +(0.05,0) $);
\coordinate (SP) at ($(S)!0.5!(P)$);
\draw[rotate around={90+\alpha:(SP)}]   ($ (SP) -(0.05,0) $) -- ($ (SP) +(0.05,0) $);

\end{tikzpicture}
\end{center}
\caption{Strips intersection.} \label{rhombus}
\end{figure}

Another side $H_1R = H_1Q+QP = 2C(\frac{1}{\tan\gamma} + \frac{1}{\sin\gamma})$. After denoting $z = \tan\gamma \leqslant 1$ and applying elementary trigonometric equalities we have $H_1R = 2C(\frac1z + \frac{\sqrt{z^2+1}}{z}) \leqslant \frac{6C}{z} = 6C\frac{1+ab}{b-a} \leqslant \frac{12C}{b-a}$ for all $a, b, z \leqslant 1$.

By proposition~\ref{rect_prop} $\;\; N(s(l)\cap s(k)) \leqslant N(PH_1RH_2) \leqslant (PH_1 + \sqrt{2})(H_1R + \sqrt{2}) \leqslant (2C+\sqrt{2})(\frac{12C}{b-a} + \sqrt{2})$.
\end{proof}

\begin{corollary}\label{intersect_count}
For integer lines $l, k$ with different elevations $e_1$ and $e_2$
$$ N(s(l, C)\cap s(k, C)) \leqslant \frac{C_1 w}{|e_1-e_2|}, $$
where $C_1>0$ depends only on $C$, $w$ is image width.\qed
\end{corollary}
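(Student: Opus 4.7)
The plan is to derive the corollary directly from the preceding lemma by substituting the explicit formula for the slope of an integer line. Since this corollary is essentially the lemma re-expressed in terms of elevations rather than slopes, I expect no substantive obstacle; the work is bookkeeping.

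First I would recall the definition: if $l$ and $k$ are integer lines with elevations $e_1$ and $e_2$, their slopes are $a_1 = e_1/(w-1)$ and $a_2 = e_2/(w-1)$, so
$$
|a_1 - a_2| = \frac{|e_1 - e_2|}{w-1}.
$$
Because $e_1 \ne e_2$, the slopes differ, so the hypothesis of the preceding lemma is satisfied and I can apply it to the pair $s(l,C)$, $s(k,C)$.

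Next I would substitute this slope difference into the bound provided by the lemma, which gives
$$
N(s(l,C) \cap s(k,C)) \;\leqslant\; \frac{C_1}{|a_1 - a_2|} \;=\; \frac{C_1 (w-1)}{|e_1 - e_2|} \;\leqslant\; \frac{C_1 w}{|e_1 - e_2|},
$$
using $w - 1 \leqslant w$ in the last step. The constant $C_1$ on the right is the same constant produced by the lemma (or, if one prefers, any constant at least as large), so it still depends only on $C$. This completes the argument; the only real ``content'' is noticing that the factor $w-1$ arising from the integrality assumption matches the denominator $w$ appearing in the stated bound up to a harmless inequality.
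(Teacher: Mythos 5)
Your proposal is correct and is exactly the intended derivation: the paper states this corollary with no written proof (it is marked as immediate), and the substitution $|a_1-a_2|=|e_1-e_2|/(w-1)$ followed by $w-1\leqslant w$ is precisely the bookkeeping that justifies it.
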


Further we will utilize the following obvious
\begin{proposition}\label{strip_count}
For mostly horizontal inclined to the right line $l$ and image $I$ of width $w$, $N(I\cap s(l,C)) \leqslant (2C\sqrt 2+1) w.$\qed
\end{proposition}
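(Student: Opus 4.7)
The plan is a direct lattice-point count, slicing the image into $w$ vertical columns (one per integer value of the ``horizontal'' coordinate along which $l$ runs) and bounding the number of integer points of $s(l,C)$ in each column. First I would write $l$ in the form $y = ax + b$ with slope $0\leqslant a \leqslant 1$. The Euclidean distance from $(x_0,y_0)$ to $l$ equals $|y_0 - ax_0 - b|/\sqrt{1+a^2}$, so the strip condition $\rho \leqslant C/2$ is equivalent to $|y_0 - a x_0 - b| \leqslant (C/2)\sqrt{1+a^2}$. Hence the vertical cross-section of $s(l,C)$ over any fixed integer $x_0$ is an interval of length $C\sqrt{1+a^2}$.

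The key inequality is $\sqrt{1+a^2} \leqslant \sqrt 2$ whenever $0\leqslant a \leqslant 1$, so each cross-section has length at most $C\sqrt{2}$. Since a closed real interval of length $L$ contains at most $\lfloor L\rfloor + 1 \leqslant L+1$ integer points, each of the $w$ integer columns contributes at most $C\sqrt 2 + 1$ lattice points to $I\cap s(l,C)$, and summing over the $w$ columns yields
\[
N(I\cap s(l,C)) \;\leqslant\; w\bigl(C\sqrt 2 + 1\bigr) \;\leqslant\; \bigl(2C\sqrt 2 + 1\bigr) w,
\]
which is the claimed bound with a little slack on the constant.

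I do not expect any real obstacle here: the only place the geometric hypothesis is actually used is the inequality $\sqrt{1+a^2} \leqslant \sqrt{2}$, which encodes precisely the ``slope at most $1$'' part of ``mostly horizontal inclined to the right''. Everything else is elementary counting, and an alternative route via proposition~\ref{rect_prop} applied to a circumscribing parallelogram (horizontal side $w-1$, vertical side $C\sqrt{2}$) obtained by intersecting $s(l,C)$ with the vertical slab of the image yields a bound of the same shape.
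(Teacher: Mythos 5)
Your argument is correct: the paper states this proposition as obvious and supplies no proof, and your column-by-column count (the vertical cross-section of $s(l,C)$ over each of the $w$ integer abscissae is an interval of length $C\sqrt{1+a^2}\leqslant C\sqrt 2$, hence contains at most $C\sqrt 2+1$ lattice points) is precisely the elementary argument being taken for granted. It even yields the slightly sharper bound $(C\sqrt 2+1)w$, so the stated $(2C\sqrt 2+1)w$ follows with room to spare.
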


\begin{lemma}\label{strip_r_lemma}
For any strip pattern set $\patset{L}$ of width $C$ and density $Q$ at image $I$ with width $w$, $$r(\patset{L})\leqslant C' w,$$ where $C'>0$ depends only on $C$ and $Q$.
\end{lemma}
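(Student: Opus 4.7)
The plan is to fix a nonempty $\patset{R}\subseteq\patset{L}$ realizing (or coming close to realizing) the maximum in the definition of $r(\patset{L})$, set $q=|\patset{R}|$ and $p=|\bigcap_{L\in\patset{R}}L|$, and bound $p\cdot q$ from above by a quantity of the form $C'w$. For each $L\in\patset{R}$ fix an integer line $\ell(L)$ with $L\subset s(\ell(L),C)$, and let $E$ be the set of elevations that occur among the lines $\ell(L)$. Patterns whose lines share a common elevation are $C$-parallel, and because $\bigcap_{L\in\patset{R}}L\ne\varnothing$ the density condition~\ref{strip_patset_def_Q} limits each elevation class to at most $Q$ patterns. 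Therefore $|E|\geqslant q/Q$.

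I would then split on $q$. If $q\leqslant 2Q$, I use the trivial bound: each $L$ sits in some strip $s(\ell(L),C)\cap I$, so by proposition~\ref{strip_count}, $p\leqslant (2C\sqrt 2+1)w$, giving $pq\leqslant 2Q(2C\sqrt 2+1)w$. If $q>2Q$, then $|E|\geqslant q/Q>2$, so $E\subseteq\{0,1,\ldots,w-1\}$ contains two elevations $e_1,e_2$ with $|e_1-e_2|\geqslant |E|-1\geqslant q/Q-1\geqslant q/(2Q)$. The common intersection $\bigcap_{L\in\patset{R}}L$ is contained in $s(\ell_1,C)\cap s(\ell_2,C)$ for the corresponding integer lines $\ell_1,\ell_2$, so by corollary~\ref{intersect_count}
\[
p\leqslant N\bigl(s(\ell_1,C)\cap s(\ell_2,C)\bigr)\leqslant \frac{C_1 w}{|e_1-e_2|}\leqslant \frac{2C_1 Q w}{q},
\]
whence $pq\leqslant 2C_1 Q w$. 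Taking $C'=\max\bigl(2Q(2C\sqrt 2+1),\,2C_1 Q\bigr)$ yields $r(\patset{L})\leqslant C'w$, with $C'$ depending only on $C$ and $Q$.

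The main content of the argument is really the interplay between the two cases: the density condition~\ref{strip_patset_def_Q} is what forces the elevations in $\patset{R}$ to spread out once $q$ is large, and only after this spreading is guaranteed does corollary~\ref{intersect_count} give a useful bound. The potentially fiddly step is the existence of two far-apart elevations: since elevations are integers in $\{0,\ldots,w-1\}$, $|E|\geqslant q/Q$ distinct elevations automatically yield two at distance at least $|E|-1$, so no subtle packing argument is needed. Everything else reduces to combining the earlier propositions with the definition of $r$, since $\patset{R}$ was arbitrary and so the bound on $pq$ bounds $r(\patset{L})$.
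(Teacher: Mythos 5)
Your proposal is correct and follows essentially the same route as the paper's own proof: partition $\patset{R}$ by elevation, use the density condition to force at least $q/Q$ distinct integer elevations, and split into the $q\leqslant 2Q$ case (proposition~\ref{strip_count}) and the $q>2Q$ case, where two elevations at distance at least $q/Q-1$ combined with corollary~\ref{intersect_count} give $pq\leqslant 2C_1Qw$. The only differences are cosmetic (your case boundary is $q>2Q$ versus the paper's $m\geqslant 2Q$, and you simplify $q/Q-1\geqslant q/(2Q)$ where the paper bounds $m/(m-Q)\leqslant 2$).
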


\begin{proof}
Suppose we have patterns subset $\patset{M} = \{L_i\}_{i=1}^m\subseteq\patset{L}$ such that $|\bigcap\limits_{i=1}^m L_i| = p > 0$. By strip pattern definition, we have a set of integer lines $\{l_i\}$ so that each $L_i\subset s(l_i)$. Consider the set $\{e_j\}_{j=1}^k$ of different $l_i$ elevations, $1\leqslant k\leqslant m$ and denote by $\patset{M}_j\subseteq\patset{M}$ the C-parallel subset of patterns covered by lines $l_i$ with the same slope $e_j$, $\patset{M} = \bigsqcup
\limits_{j=1}^k\patset{M}_j$ Again, by strip pattern definition, $|\patset{M}_j|\leqslant Q$ for all $j$, which yields $k\geqslant m/Q$.

First suppose that $m \geqslant 2Q$, so $k\geqslant 2$. In this case we can find two distinct indices $j_1$ and  $j_2$ so that $e_{j_2}-e_{j_1}\geqslant k-1\geqslant m/Q-1$. By corollary~\ref{intersect_count},
$$p \leqslant N(s(l_{j_1}), s(l_{j_2})) \leqslant \frac{C_1 w}{e_{j_2}-e_{j_1}}\leqslant \frac{C_1 w}{m/Q-1},$$
hence
$$S(\patset{M}) = pm \leqslant C_1 Q \,w \,\frac{m}{m-Q}\leqslant 2C_1 Q w.$$

Suppose now that $m<2Q$. In this case by proposition~{\ref{strip_count}}
$$S(\patset{M}) \leqslant 2Q (2C\sqrt 2+1) w.$$

\noindent By setting $C' = \max(2C_1Q, 2Q(2C\sqrt 2+1))$ and noting that $\patset{M}$ is arbitrary we finish the proof.
\end{proof}

From theorem~\ref{bound_theorem}, lemma~\ref{strip_r_lemma} and propositions~\ref{sum2_proposition}, \ref{sum_or_ineq} follows

\begin{theorem}\label{Ln_theorem}
If $L\in\{\mathrm{OR}, \mathrm{SUM}, \mathrm{OR2}, \mathrm{SUM2}\}$,  $\{\patset{L}_n\}$ is a family of strip pattern sets of the same width and density, each $\patset{L}_n$ defined at image $I_n$ of size $n\times n$ and for all $\patset{L}_n$:
\begin{enumerate}
    \item $|\patset{L}_n| = \Theta(n^2)$,
    \item $V(\patset{L}_n) = \Omega(n^3)$,\label{Ln_theorem_large}
\end{enumerate}
then $\quad L(\patset{L}_n) = \Omega(n^2 \log n)$.\qed
\end{theorem}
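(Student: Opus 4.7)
My plan is to apply theorem~\ref{bound_theorem} directly to $\patset{L}_n$, using lemma~\ref{strip_r_lemma} to bound the denominator, the two hypotheses to control the numerator and the logarithmic factor, and finally transferring the resulting $\mathrm{OR}$-bound to the other three complexity measures via the cheap propositions proved earlier.

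Concretely, first I would invoke lemma~\ref{strip_r_lemma} at width $w = n$: since every $\patset{L}_n$ is a strip pattern set of the \emph{same} width $C$ and \emph{same} density $Q$, the constant $C'$ produced by the lemma depends only on $C$ and $Q$, so $r(\patset{L}_n) \leqslant C' n$ uniformly in $n$. Combined with hypothesis~\ref{Ln_theorem_large} (i.e.\ $V(\patset{L}_n) = \Omega(n^3)$), this gives $V(\patset{L}_n)/r(\patset{L}_n) = \Omega(n^2)$. Next I would use both hypotheses together to compute $V(\patset{L}_n)/|\patset{L}_n| = \Omega(n^3)/\Theta(n^2) = \Omega(n)$, so that $\log_3\bigl(V(\patset{L}_n)/|\patset{L}_n|\bigr) = \Omega(\log n)$. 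Substituting into theorem~\ref{bound_theorem} then yields $\mathrm{OR}(\patset{L}_n) = \Omega(n^2 \log n)$.

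To finish, I would extend this bound to the remaining three cost measures. Proposition~\ref{sum_or_ineq} gives $\mathrm{SUM}(\patset{L}_n) \geqslant \mathrm{OR}(\patset{L}_n)$, while proposition~\ref{sum2_proposition} loses only a constant factor $\tfrac12$, which is absorbed by the $\Omega$ notation and thus yields the same asymptotic bound for $\mathrm{OR2}(\patset{L}_n)$ and, by composition, for $\mathrm{SUM2}(\patset{L}_n)$.

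At this point essentially nothing hard remains: the geometric content of the argument was discharged inside lemma~\ref{strip_r_lemma} via the rhombus intersection estimate, and theorem~\ref{bound_theorem} supplies the nontrivial circuit-complexity input as a black box. The only subtlety worth flagging is that the constants hidden in the $\Theta$, $\Omega$ and in $C'$ must be uniform over the whole family $\{\patset{L}_n\}$; but this uniformity is exactly what the hypothesis ``same width and density'' was designed to ensure, so the proof really is a one-line substitution into the bounding theorem followed by routine translation between complexity measures.
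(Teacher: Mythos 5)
Your proposal is correct and follows exactly the route the paper intends: the theorem is stated there as an immediate consequence of theorem~\ref{bound_theorem}, lemma~\ref{strip_r_lemma} (with $w=n$) and propositions~\ref{sum2_proposition} and~\ref{sum_or_ineq}, which is precisely the substitution you carry out. Your remark about uniformity of the constants over the family $\{\patset{L}_n\}$ is the right thing to check and is indeed guaranteed by the ``same width and density'' hypothesis.
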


The conditions of the theorem include all pattern sets which are enough ``large'' (item~\ref{Ln_theorem_large}) and have mean pattern size of $\Omega(n)$. Any lines parametrization from~(\ref{hough}) which contains $\Theta(n)$ different slopes and $\Theta(n)$ lines for each slope falls under these conditions.

\section{Conclusion}\label{conclusion}
In this paper we successfully applied a certain result from the theory of boolean circuits to obtain the $\Omega(n^2 \log n)$ asymptotic lower bounds on additive complexities of the fast Hough transform and strip pattern based discrete Radon transform. For the fast Hough transform this bound is $\Theta$-exact.

In fact a stronger result was proved, limiting from below the so called OR-complexity. Also, for the fast Hough transform pairs of simple boundary inequalities with explicit coefficient were provided, covering all the SUM, OR, SUM2 and OR2 cases.

We can outline a few further research directions. An intriguing question is if strip patterns complexity is $o(n^3)$ and, if so, whether the corresponding circuits have constructive description. Obtaining better lower bounds than the proven $\Omega(n^2\log n)$ would also be interesting. Another question is whether the FHT and strip patterns asymptotic complexities would change after allowing to use the subtraction operation.

It seems that the described approach could be applied to analyze the complexity of the generalized Hough transform patterns. Simple shapes might allow something similar to the geometrical analysis of $r(\patset{T})$ which we used in this paper.

\section*{Acknowledgements}
I would like to thank my colleagues Dmitry Nikolaev for bringing this subject to my attention and useful advice, Valerii Sokolov for numerous remarks and correcting certain inconsistencies, Igor Polyakov and Alexey Savchik for verifying the proofs and locating a few misprints, and Andrey Gladkov for help with the FHT pictures.

\end{document}